\documentclass[12pt]{article}

\usepackage{amsmath}
\usepackage{amssymb}
\usepackage{amsthm}
\usepackage{enumitem}
\usepackage[T1]{fontenc}
\usepackage{mathtools}
\usepackage{microtype}
\usepackage[dvipsnames]{xcolor}
\definecolor{darkgreen}{RGB}{0,100,0}
\usepackage[all,cmtip]{xy}
\usepackage{graphicx}
\newcommand{\R}{\mathbb{R}}
\renewcommand{\P}{\mathbb{P}}
\usepackage{verbatim}

\usepackage{biblatex}
\usepackage{hyperref}
\usepackage[capitalise]{cleveref}

\AtEveryBibitem{\ifboolexpr{
    test {\ifentrytype{article}}
    or
    test {\ifentrytype{book}}
  }{\clearfield{url}}{}}
\renewbibmacro{in:}{\setunit{\addcomma\space}}

\usepackage[format=plain,font={small,it}]{caption}
\usepackage{subcaption}
\usepackage{placeins}

\usepackage{algorithm}
\usepackage{algpseudocode}
\usepackage[normalem]{ulem}
\hypersetup{
  colorlinks=true,
  linkcolor=black,
  citecolor=black,
  filecolor=magenta,
  urlcolor=MidnightBlue
}

\usepackage[margin=1in]{geometry}

\theoremstyle{plain}
\newtheorem{theorem}{Theorem}[section]

\newtheorem{lemma}[theorem]{Lemma}

\theoremstyle{definition}
\newtheorem{definition}[theorem]{Definition}

\theoremstyle{remark}

\newtheorem{remark}[theorem]{Remark}

\newtheorem*{remark*}{Remark}
\newtheorem*{notation*}{Notation and Terminology}

\numberwithin{equation}{section}

\newcommand{\Romannum}[1]{\uppercase\expandafter{\romannumeral #1}}

\DeclarePairedDelimiterX{\pair}[2]{\langle}{\rangle}{#1,#2}

\renewcommand{\S}{Section }

\title{AI for AI: Optimizing Additional Infrastructure Build-out to
  Power Artificial Intelligence Data Centers}
\author{Alexander Crosier\thanks{Formerly Princeton University} \and Kyle Onghai\thanks{ORFE Department, Princeton University} \and Ronnie Sircar\thanks{\hangindent=1.75em
  \hangafter=1 ORFE Department, Bendheim Center for Finance and Andlinger Center for Energy \& the Environment,
Princeton University}}
\date{\today}

\begin{document}

\maketitle

\begin{abstract}
  The twenty-first century's transformative technology, artificial
  intelligence, is increasingly constrained by the twentieth century's
  transformative technology, the electricity grid. Rapid growth in
  electricity demand from data centers is leading to higher
  electricity prices, without a compensating supply-side response. We
  develop a framework linking data-center load growth, available
  generation capacity, and market-clearing prices to understand this
  phenomenon. We first analyze a deterministic model to show how
  differing estimates of demand and supply growth rates affect prices.
  We then model the expansion of new data centers and their associated
  electricity demand, together with build-outs of new electricity
  supply, as stochastic processes, resulting in probabilistic
  distributions of supply, demand, and prices rather than a single
  forecast. Finally, we formulate generation expansion as a stochastic
  control problem in which a revenue-maximizing investor dynamically
  chooses the intensity of supply-side investments. The analysis
  highlights a central challenge of the data-center build-out: even
  when rapid demand growth increases the need for new generation, the
  uncertainties related to load forecasts, development execution
  risks, and value cannibalization from overbuilding capacity may
  weaken incentives to invest at the pace required to keep electricity
  prices stable.
\end{abstract}

\section{Introduction}
\label{sec:introduction}

The artificial intelligence era has sparked a rapid increase in
electricity demand from data centers. From 2018 to 2025, electricity
demand from data centers alone increased from 76 TWh (1.9\% of U.S.
total demand) to more than 200 TWh (4.8\% of U.S. total demand)
\cite{shehabi2024datacenter}. Forecasts for future demand from AI data
centers vary widely, though grid operators and utilities across the
Midwest, Southeast, Southwest, and Texas expect these facilities to
increase peak demand by 20\% to 40\% or more over the next decade. In
2025, ERCOT, Texas's primary power grid, forecasted 33 GW of new
demand from data centers for artificial intelligence and
cryptocurrency mining by 2031, equivalent to a 36\% increase relative
to its all-time system peak of 91 GW \cite{ercot2025loadforecast}.
According to \cite{slok2026datacenter}, ``Texas alone accounts for
roughly 100 GW of planned data center capacity, more than the next two
states, Virginia and Utah, combined.''

These projections expose a fundamental tension between the rapid
development of artificial intelligence and the much slower expansion
of the infrastructure required to power data centers. In this sense,
one of the twenty-first century's most transformative technologies,
artificial intelligence, is increasingly constrained by one of the
twentieth century's most transformative technologies, the electricity
grid.

\subsection{Demand, Supply \& Price Uncertainties}
\label{sec:demand_supply_price_uncertainties}

On the supply side, interconnection queues, transmission constraints,
and supply chain challenges may limit the pace at which new generation
can be added to the grid. As aging thermal plants retire, replacement
resources may not connect quickly enough to meet the sudden surge in
electricity demand. Lead times for new gas turbines have reached five
to seven years, while renewable energy projects, the fastest way to
add new capacity, often face permitting hurdles, political headwinds,
and community opposition that can delay deployment for years
\cite{anderson2025gasturbines}. Calvin Butler, the chief executive of
the nation's largest utility, Exelon, warned that capacity-constrained
regions could be at risk of rolling blackouts as soon as 2027, and
rate increases for all customers would be needed to fund new
infrastructure \cite{muir2026blackouts}.

The challenge is not only how much new electricity supply should be
built, but also which technologies should be built and when. Natural
gas, solar, wind, large-scale nuclear, and small modular reactors
differ substantially in their capital and operating costs, reliability
characteristics, and permitting and supply-chain constraints. The
attractiveness of each option depends both on a technology's cost,
{\em and} how long it can take for it to be built and get connected to
the grid.

These issues have recently become politically prominent. PJM
Interconnection operates the grid across 13 states and the District of
Columbia, including regions where retail customers have recently
experienced substantial increases in electricity bills. Following a
20\% increase in residential power bills in parts of New Jersey, both
candidates in the state's 2025 gubernatorial race made consumers'
electricity costs a centerpiece of their campaigns
\cite{weil2025electricbills}. PJM's coverage includes the largest
concentration of data centers in the United States, the so-called Data
Center Alley in Northern Virginia, and the region's independent market
monitor attributed 40\% of the costs in the December 2025 capacity
auction to data centers expected to come online in 2027 and 2028
\cite{monitoringanalytics2026pjm}.

PJM's forecast demand for the 2028-9 delivery year increased by
roughly 2~GW, largely because of data-center development, while its
most recent capacity auction in July 2026 attracted only about 525~MW
of new resources for that delivery year \cite{utilitydive2026pjm}.
Capacity costs are only one component of retail bills, which combine
energy costs, capacity charges, transmission and distribution costs.
Nevertheless, higher wholesale energy and capacity prices ultimately
raise costs borne by households and businesses that are not directly
responsible for the new data-center load.

The economic mechanism is not simply that high prices call forth
instantaneous new supply. High electricity and capacity prices may be
necessary to attract generation investment, but they need not induce
sufficient or timely entry when build-outs are costly, slow,
uncertain, and irreversible. Moreover, the same investment that
alleviates scarcity erodes the revenues that motivated it. Because
electricity prices are determined by supply-demand market clearing,
each new completed supply generator lowers the price received by other
generators, thus cannibalizing the revenue earned by existing
capacity.

A capacity expansion decision weighs the value of one more completion
against this self-inflicted price decline, and when the latter
dominates, optimal investment can cease even while aggregate
electricity demand continues to grow. This tension between the
scarcity prices and the pace of new entry those prices induce,
compounded by the uncertainty and lumpy arrival of new build-outs, is
central to the stochastic model developed below.

\subsection{Contributions}
\label{sec:contributions}

Building upon the study of data-center impact on electricity grids
started in May 2025 coauthored by the first and third authors
\cite{sirik2026datacenters}, we develop a framework linking
data-center load growth, available electricity supply, and
market-clearing prices. Our starting point is a deterministic model
that tracks the average evolution of supply and demand (Section
\ref{sec:deterministic_growth}). It describes the path obtained when
electricity supply and data-center load grow at their forecasted
rates. A first stochastic model then preserves these underlying growth
rates while replacing smooth average growth with uncertain, discrete
arrivals (Section \ref{sec:stochastic_supply_demand}). The
deterministic analysis can be viewed as the mean-path counterpart of
this stochastic model: it describes where supply, demand, and prices
tend to move on average, while the stochastic formulation reveals the
distribution of outcomes around that benchmark.

We subsequently endogenize generation expansion from the perspective
of a revenue-maximizing capacity developer who dynamically chooses the
intensity of supply-side investment under uncertainty about both
data-center load growth and build-out completion (Section
\ref{sec:control_intensity}). This framework also allows investors to
compare alternative supply technologies, including natural gas, solar,
wind, and nuclear, whose attractiveness depends on their capital costs
and expected time to build-out. The resulting controlled-intensity
formulation captures both the uncertain arrival of new capacity and
the revenue-cannibalization effect through which successful investment
may lower the market-clearing price earned on the owner's generation
portfolio.
 \section{Electricity Demand, Supply, and Price Formation}
\label{sec:demand_supply_price}

We introduce a stylized dynamic model of an electricity market
historically organized around traditional residential, commercial, and
industrial consumers, but now being reshaped by the rapid arrival of
hyperscaler data centers. We therefore distinguish between traditional
(group 1) and hyperscaler (group 2) electricity demand, the latter of
which we treat as a relatively new, fast-growing and unpredictable
load driver.

\subsection{Market Clearing}
\label{sec:market_clearing}

We fix a reference price $P_0$ representative of the average wholesale
market price before anticipated rapid data center growth. The three
components capturing the state of a simplified electricity market at
time $t\geq0$ in our model are:
\begin{itemize}\itemsep -2pt
  \item \(S=(S_t)_{t\geq0}\) -- total accredited power generation capacity (supply);
  \item \(I=(I_t)_{t\geq0}\) -- traditional  (group 1) electricity demand at the reference price \(P_0\);
  \item \(X=(X_t)_{t\geq0}\) -- data-center (group 2) electricity demand at the reference price \(P_0\).
\end{itemize}
We simulate our model on the ERCOT-operated grid by taking parameters
inferred from published figures and estimates specific to Texas. More
specifically, we set the reference price \(P_0 = \$30\)/MWh to roughly
match ERCOT's average wholesale energy price \cite{ERCOT2024Overview},
initial reference aggregate traditional demand \(I_0 = 55\)~GW, and
reference aggregate data-center demand \(X_0 = 8\)~GW
\cite{coblermarket}.

\paragraph{Supply}
As time evolves, more supply capacity becomes available, and we will
assume $S_t$ is overall increasing (ignoring asset retirements,
degradation, and seasonal effects as relatively minor decrements). In
this stylized model, \(S_t\) represents total generation capacity that
is available to service demand. We do not model short-term events
affecting dispatchable energy like intermittency, outages,
transmission constraints when calculating this component as our goal
is to understand long-term shifts in grid operations.

\paragraph{Demand}
Increase in \(I_t\) over time reflects traditional (non-AI) economic
growth and broader electrification, while growth in \(X_t\) reflects
additional reference data-center demand. Price-responsive demand from
each group is obtained by multiplying their reference demand by their
respective demand elasticity (or price-response) functions $F_{1,2}$
at price level $P_t$:
\begin{equation}
  \label{eq:demand_general}
  D_1(I_t, P_t) = I_t F_1(P_t), \qquad  D_2(X_t, P_t) = X_t F_2(P_t).
\end{equation}
The demand elasticity functions \(F_{1,2}:\R_{\geq 0}\to\R_{\geq 0}\)
are positive (or zero), decreasing (or flat). The total demand
\[
  D(I_t,X_t,P_t)= D_1(I_t, P_t) + D_2(X_t, P_t)
\]
is therefore also a non-increasing, non-negative function of price for
each $I_t,X_t>0$. We impose the normalization
\begin{equation}
  \label{eq:normalize}
  F_1(P_0)=F_2(P_0)=1,
\end{equation}
which is consistent with the interpretations of \(I\) and \(X\) given
above. This gives that the demand at time $t=0$ is
$D(I_0,X_0,P_0) = I_0+X_0$. Thus \(I_t\) and \(X_t\) are reference
demand levels at the price \(P_0\), while \(D_1\) and \(D_2\) are the
corresponding price-responsive demands at the prevailing price
\(P_t\).

\paragraph{Price}
Given supply $S_t$, the price \(P_t\) is determined by the
market-clearing condition
\begin{equation}
  \label{eq:price_market_clearing}
  D(I_t,X_t,P_t)= S_t,
\end{equation}
with the proviso that if $S_t\geq D(I_t,X_t,0)$, we set $P_t=0$. We
think of $P_t$ as a long-term (for instance, monthly) average price at
time $t$ rather than a specific day-ahead or real-time price, both of
which fluctuates on a short (minute or hourly) microscopic time scale.
Thus we interpret any fluctuation in the price that the model produces
as a shift in the prevailing average price induced by a shift in
demand or supply.
\begin{remark}
  \label{rem:unique_price_condition}
  If \(F_1\) and \(F_2\) are \(C^1\) on an open interval
  \(O=(\underline p,\overline p)\subseteq \mathbb{R}_+\) and
  \(F_1',F_2'<0\) on \(O\), then there is at most one market-clearing
  price in \(O\). Indeed, define \(G(s,i,x,p)=s-D(i,x,p)\) so that the
  market-clearing condition \eqref{eq:price_market_clearing} becomes
  \(G(s,i,x,p)=0\). Observe that
  \(G\in C^1(\mathbb{R}_+^3\times O;\mathbb{R})\) and
  \[
    \frac{\partial G}{\partial p}(s,i,x,p) = -\frac{\partial
      D}{\partial p}(i,x,p) = -\left[iF_1'(p)+xF_2'(p)\right] >0
  \]
  for all \((s,i,x,p)\in\mathbb{R}_+^3\times O\). Because \(G\) is
  strictly increasing with respect to \ \(p\), there is at most one
  \(p^*\in O\) that clears the market. For existence, define
  \(D^+(i,x)=\lim_{p\uparrow \overline p}D(i,x,p)\),
  \(D^-(i,x)=\lim_{p\downarrow \underline p}D(i,x,p)\), and
  \(U=\{(s,i,x)\in\mathbb{R}_+^3:D^+(i,x)<s<D^-(i,x)\}\) to be the
  open domain of states where market-clearing is possible within
  \(O\). Since \(\lim_{p\downarrow \underline p}G(s,i,x,p)<0\) and
  \(\lim_{p\uparrow \overline p}G(s,i,x,p)>0\) for each
  \((s,i,x)\in U\), the intermediate value theorem furnishes a unique
  root \(p^*\in O\). In this way, we obtain a function \(p^*:U\to O\)
  such that
  \[
    (s,i,x)\in U \implies G(s,i,x,p^*(s,i,x))=0.
  \]
  Moreover, by the implicit function theorem, we have
  \(\partial_s p^*<0,\ \partial_x p^*>0,\ \partial_i p^*>0\). The
  signs of the partial derivatives confirm standard expectations that
  price decreases with supply and increases with demand.
\end{remark}

\subsection{Demand Response Functions}
\label{sec:demand_response_functions}

We now fix a particular choice of demand elasticity functions that we
will use throughout. These are constructed so that higher prices,
while causing both consumer groups to decrease their demand, impact
the hyperscalers to a lesser extent than the traditional users.

Let
\begin{equation}
  \label{eq:choke_price_elasticities}
  F_1(P) \coloneqq B_1 \left[ 1-\frac{P}{A_1} \right]^+, \qquad F_2(P) \coloneqq
  B_2\left(\left[ 1-\frac{P}{A_2} \right]^+\right)^2,
\end{equation}
where $A_{1,2}<\infty$ are the \textit{(effective) choke prices} of
traditional and hyperscale consumers respectively. We assume that the
reference price $P_0<A_{1,2}$ and set
\[
  B_1=(1-P_0/A_1)^{-1}>1, \qquad  B_2=(1-P_0/A_2)^{-2}>1,
  \]
so that the demand-response functions are normalized to one for both
groups at \(P_0\), consistent with \eqref{eq:normalize}.
Remark~\ref{rem:unique_price_condition} applies with
\(O = (0, \min\{A_1, A_2\})\).

In our simulation results, we retain \(P_0=\$30/\mathrm{MWh}\) as
motivated in Section~\ref{sec:market_clearing} and choose
$A_1=\$70/$MWh and $A_2=\$150/$MWh. The demand-response functions are
plotted in Figure~\ref{fig:demand_multiplier_curves}.
\begin{figure}[htb]
  \centering
  \includegraphics[width=0.7\textwidth]{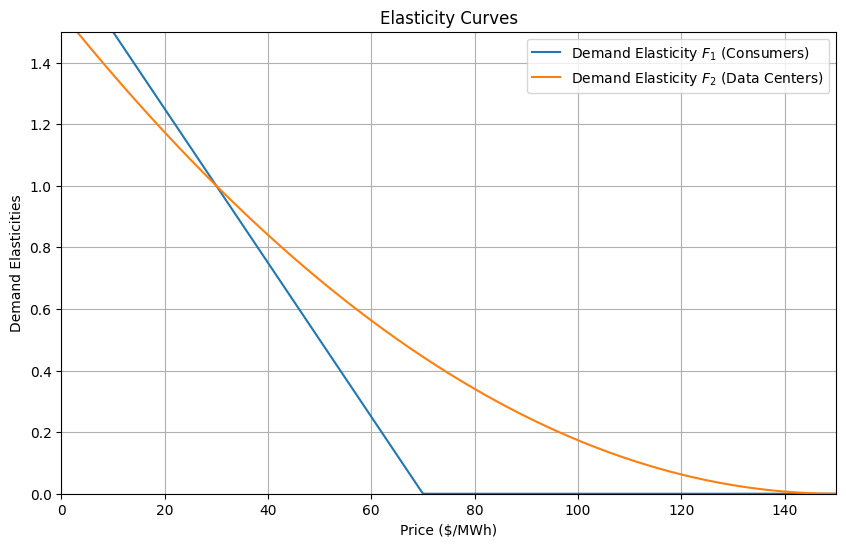}
  \caption{Demand elasticities for traditional consumers and
    hyperscalers normalized to $1$ at the reference price
    \(P_0=\$30\)/MWh, with $A_1=\$70/$MWh and
    $A_2=\$150/$MWh.}
  \label{fig:demand_multiplier_curves}
\end{figure}
With these parameters, traditional consumers will largely shut down
operations or move elsewhere if typical (not just peak) wholesale
prices approach $A_1 = \$70$/MWh while hyperscalers retain a higher
proportion of reference demand at these prices. However, we
demonstrate in Section~\ref{sec:deterministic_growth} that even when
the supply remains fixed at \(S_0 = I_0 + X_0\) with \(I_0 = 55\)~GW
and \(X_0 = 8\)~GW, the model's price does not reach $A_1$ until over
22 years, long after the time horizon \(T = 6\) years that we
consider. As \(S_t\) is increasing, we observe that neither group's
aggregate demand vanishes in both the deterministic
(Section~\ref{sec:deterministic_growth}) and stochastic
(Section~\ref{sec:stochastic_supply_demand}) settings.

\subsection{Deterministic Demand \& Supply Growth}
\label{sec:deterministic_growth}

We assume throughout that reference traditional demand grows at a rate
$\gamma\geq0$ relative to its size:
\begin{equation}
  \label{eq:Iteqn}
  I_t = I_0 e^{\gamma t},
\end{equation}
with running time $t$ measured in years. So, for example,
$\gamma=0.03$ would quantify an economy and electrification-driven
growth in reference traditional demand of $3\%$.

\subsubsection{No New Build-out}
\label{sec:no_new}

To give some intuition about the deterministic model's predictions, we
begin with the extreme case where supply stays fixed: $S_t=S_0$ for
all $t\geq0$. In other words, there is no additional infrastructure
built or planned for. This is in some ways our worst-case scenario, as
we exclude supply actually declining overall in our models. Detailed
calculations and figures are relegated to
Appendix~\ref{app:fixed_supply}.

\paragraph{Pre-hyperscaler Era}
Suppose first that there are no hyperscalers, so that \(X_t=0\) for
all \(t\geq 0\). Then the market clearing condition
\eqref{eq:price_market_clearing} reduces to $I_t F_1(P_t)=S_0$, and
substituting the demand-response functions established in Section
\ref{sec:demand_response_functions} leads to
$P_t =A_1-(A_1-P_0)e^{-\gamma t}$. Thus, in the absence of
hyperscalers, the price rises gradually from \(P_0<A_1\) but never
reaches \(A_1\). The rising market-clearing price adjusts the growing
reference demands to the fixed available supply. See
Figure~\ref{fig:hyperscaler_price_perturbation} (blue curve).

\paragraph{Linear data-center reference-demand growth}
The rise of data center demand changes this price trajectory. Suppose
next that \(X_t\) grows linearly over time: $ X_t = X_0 + c_X t$,
where \(c_X>0\) is the growth rate of reference data-center demand,
but that there is still no growth in supply. As a compromise between
ERCOT's adjusted large load breakdown and the TSP provided large load
breakdown, we choose \(c_X\coloneqq 6\)~GW/year for our simulations
\cite{ercot2025loadforecast}. The resulting price path is computed in
Appendix \ref{app:fixed_supply} and is also plotted in
Figure~\ref{fig:hyperscaler_price_perturbation} (red curve). Because
data-center demand is less price-sensitive, growth in reference
data-center demand raises the market-clearing price significantly
above the no-hyperscaler case.

Figure \ref{fig:fixed_supply_demand_decomposition} contrasts reference
and price-responsive demand for both groups, and illustrates the
crowding out of traditional demand by reference data-center demand
under fixed supply. We conclude that, even with no growth in supply,
traditional demand would not fall to zero until over 22 years, and
typical electricity prices would take that long to approach the group
1 choke price. However, without supply-side response, that is,
additional infrastructure, prices would double from $\$30$/MWh to
$\$60$/MWh over 10 years.

\subsubsection{Linear Supply Growth}
\label{sec:deterministic_supply_growth}

Suppose that available supply also grows linearly in time:
$S_t=S_0+c_St$, where \(c_S\geq 0\) is the amount of new supply
capacity added per year in GW/year.
Figure~\ref{fig:deterministic_supply_growth} shows the resulting price
paths for different choices of $c_S$. When supply grows slowly, demand
growth dominates and prices rise; as \(c_S\) increases, this upward
pressure is reduced, and sufficiently rapid supply growth causes
prices to fall, illustrating the cannibalization effect.
\begin{figure}[htbp]
  \centering
  \includegraphics[width=0.78\linewidth]{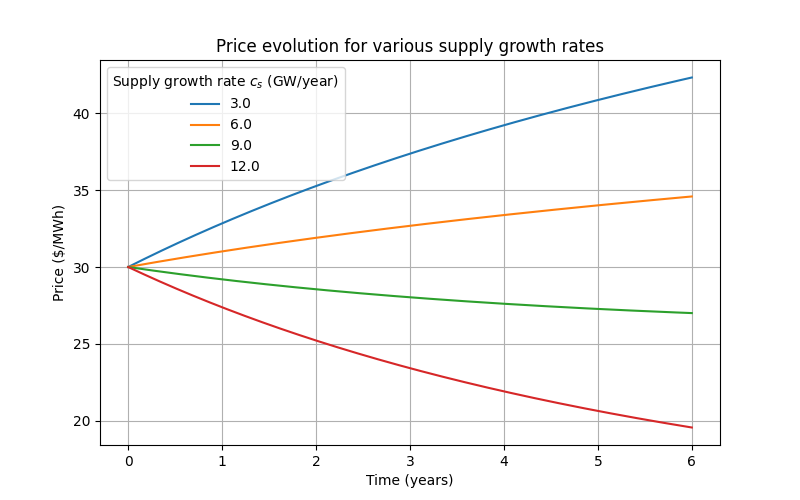}
  \caption{Average annual wholesale price with supply growth rates
    \(3, 6, 9, 12\) GW/year over 2025--2031.  General demand grows at
  3\% per year and reference data-center demand grows at 6 GW/year.}
  \label{fig:deterministic_supply_growth}
\end{figure}

While we could as a next step incorporate the costs of building new
supply and build a revenue-maximization model to quantify the optimal
build-out rate $c_S$, we do that instead in Section
\ref{sec:control_intensity}, in a stochastic framework which includes
uncertainties in supply and hyperscaler demand, which we introduce in
Section \ref{sec:stochastic_supply_demand}.

\subsection{Stochastic Supply and Demand Model}
\label{sec:stochastic_supply_demand}

The deterministic analysis in Section \ref{sec:deterministic_growth}
describes how electricity prices evolve when supply and demand grow
smoothly at specified average rates. In practice, however, neither new
generation capacity nor data-center load arrives smoothly. Power
plants come online at uncertain times, while new data centers create
large, irregular increments in electricity demand. We therefore
replace the deterministic growth paths with stochastic jump processes
that preserve the same underlying average growth rates while allowing
the timing of additions to be uncertain.

\subsubsection{Poisson and Compound Poisson Processes}
\label{sec:poisson_processes}

A Poisson process $N_t^\mu$, with intensity parameter $\mu>0$, is a
useful building block to model new data centers coming online at
uncertain times. It is a counting process, meaning it takes
successively values in $\{0,1,2,\cdots\}$, starting at zero:
$N^{\mu}_0=0$. Over a short time interval \([t,t+\Delta t]\) of length
$\Delta t$, the probability that $N^\mu$ jumps by one is approximately
$\mu\Delta t$:
\[
  \P\{N^\mu_{t+\Delta t} - N^\mu_t=1\} = \mu\Delta t+o(\Delta t),
\]
while the probability of no jump is
\[
  \P\{N^\mu_{t+\Delta t} - N^\mu_t=0\} = 1-\mu\Delta t+o(\Delta t).
\]
Consequently, the probability of two or more jumps over the short time
period \textcolor{darkgreen}{is} negligibly small.

A compound Poisson process extends this framework by allowing each new
data center to have a random reference-demand increment. If
$Y_1,Y_2,\ldots$ are independent and identically distributed jump
sizes, then
\begin{equation}
  \label{eq:Xdef}
  X_t=X_0+\sum_{k=1}^{N_t^\mu}Y_k
\end{equation}
is called a compound Poisson process. The counting process $N_t^\mu$
determines when new data centers come online, while $Y_k$ determines
the increase in reference data-center demand contributed by the
\(k\)th arriving data center.

\subsubsection{Demand Model \& Parameters}
\label{sec:demand_params}

On the demand side, we keep the deterministic growth formula
\eqref{eq:Iteqn} for \(I_t\), and we model reference data-center demand
\(X_t\) as a compound Poisson process \eqref{eq:Xdef}. According to
recent reporting by the Texas Tribune \cite{TexasTribune_ERCOT_2032},
ERCOT projects that peak demand on the Texas grid could reach 175~GW
by 2032, nearly double the state's current record peak. This explosive
growth is largely attributed to the rapid expansion of data centers
handling AI-related workloads.
Figure~\ref{fig:operating_and_planned_histogram} shows a
representative distribution of data-center campuses across capacity
bins.
\begin{figure}[htb]
  \centering
  \includegraphics[width=0.7\textwidth]{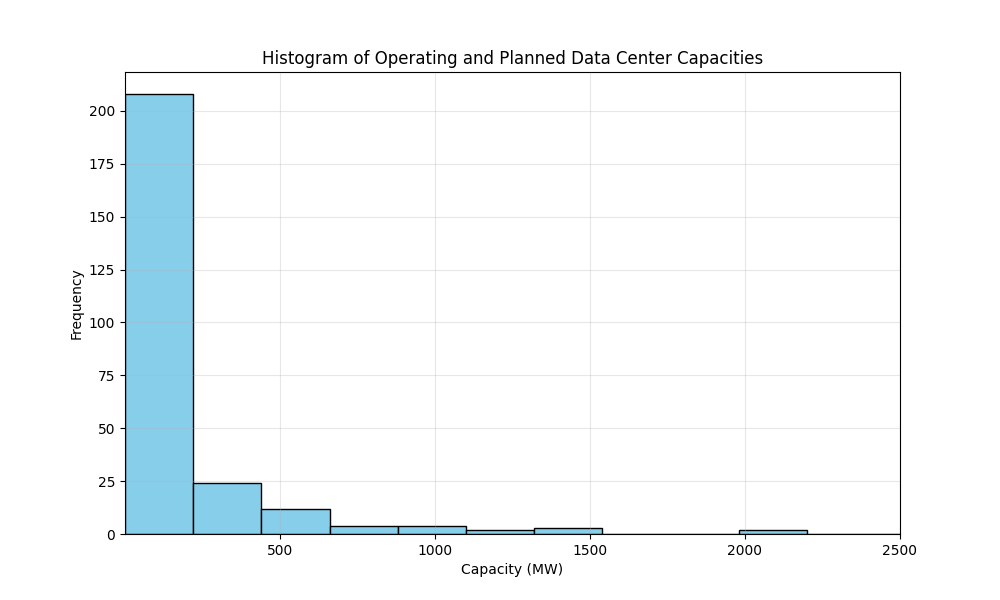}
  \caption{\looseness=0 Distribution of Texas data-center campus capacities
    (operating and planned). Most campuses fall below 0.5~GW, while a
    small number of very large projects create a pronounced right
    tail. Data centers with capacity above 2.5~GW (e.g. Fermi Project
    Matador) are not displayed. Source: \cite{ford_data_2026}.}
  \label{fig:operating_and_planned_histogram}
\end{figure}
The distribution is strongly right-skewed, reflecting that while many
facilities are smaller, the massive aggregate growth is driven by the
addition of extraordinarily large facilities. This unprecedented scale
and empirical pattern motivate a model in which very large data-center
additions are rare but possible.

Additions to reference data-center demand are thus random variables:
each new data-center arrival contributes a reference-demand increment
$Y_k$ drawn from a lognormal distribution calibrated by
maximum-likelihood on Texas data center capacities without the Project
Matador outlier, giving mean \(0.225\) GW and standard deviation
\(1.54\) GW. The choice of lognormal is motivated by the observed
heavy right tail. Performing a Monte Carlo goodness-of-fit test as
specified in \cite{scipy_goodness_of_fit, 2020SciPy-NMeth}
demonstrates that the MLE-fitted lognormal distribution describes the
data center capacities (excluding the Project Matador outlier) well.

The data-center build-out intensity is set to
$\mu=6/0.225\approx 26$/year, so that $\mu\mathbb{E}[Y]=6$~GW/year
agrees with the choice of \(c_X\) in Section~\ref{sec:no_new}.

\subsubsection{Supply Model \& Parameters}
\label{sec:supply_params}

For additional infrastructure technologies $j=1,\ldots,d$ (see Table
\ref{tbl:historical_supply_parameters}), let $N_t^{\lambda_j}$ be
independent Poisson processes with intensities $\lambda_j>0$, where
each arrival represents the completion of an additional generation
build-out of technology $j$. Let $\delta_j$ denote the amount of
generation capacity added by one such installation, assumed constant
for each technology, for simplicity. The total available generation
capacity at time $t$ is then
\begin{equation}
  \label{eq:Steqn}
  S_t=S_0+\sum_{j=1}^d \delta_j\,N_t^{\lambda_j}.
\end{equation}
The technology-specific counting processes describe additions of new
infrastructure, while $S_t$ is the aggregate generation capacity
across all technologies entering the market-clearing condition
\eqref{eq:price_market_clearing}.

We distinguish additional infrastructure by generation technology:
natural gas, coal, solar, wind, large-scale (LS) nuclear, and small
modular reactor (SMR) nuclear. They differ both in terms of the
average build-out time (for instance long for LS nuclear, considerably
shorter for a new solar installation); and the increase in potential
generation capacity that they bring (large for LS nuclear, much
smaller for solar or wind farms). Using the additional infrastructure
capacity data in \cite{eia860m_dec2025}, we calibrate the Poisson
arrival intensities $\lambda_j$ and representative capacity increment
$\delta_j$ for each technology, as reported in
Table~\ref{tbl:historical_supply_parameters}.
\begin{table}[htb]
  \centering
  \begin{tabular}{|c|c|c|c|}
    \hline
    Technology $j$ & $\lambda_j$ (per year) & $\delta_j$ (MW) & \parbox{0.35\textwidth}{\centering Expected additional capacity $\lambda_j\delta_j$ (MW/year)} \\[2ex]
    \hline
    Natural gas & 10 & 250 & 2500\\
    Coal & 0.25 & 500 & 125\\
    Solar & 40 & 50 & 2000\\
    Wind & 10 & 100 & 1000\\
    LS nuclear & 0.1 & 1000 & 100\\
    SMR nuclear & 0.25 & 250 & 62.5\\
    \hline
  \end{tabular}
  \caption{Baseline parameters for stochastic capacity additions by
    generation technology. Parameters for established technologies are
    calibrated using 2015--2025 data from \cite{EIA_capacity}; the SMR
    parameters are assumed as described in the text.}
  \label{tbl:historical_supply_parameters}
\end{table}

As $\lambda_j$ is the expected number of capacity additions of
technology $j$ per year, while $\delta_j$ is the capacity added by
each arrival, $\lambda_j \delta_j$ is the expected annual contribution
of technology $j$ to total generation capacity. Because commercial
SMRs have not yet generated a historical record of capacity additions,
their parameters cannot be calibrated in the same way. We assume an
SMR arrival intensity corresponding to one new build every four years
on average.

\subsubsection{Simulations}
\label{sec:current_supply}

We compute $N=1000$ Monte Carlo simulations of the supply process
$S_t$ in \eqref{eq:Steqn} and the hyperscaler demand driver $X_t$ in
\eqref{eq:Xdef} over $T=6$ years, representing years 2025-31, summarizing
the results in Figure~\ref{fig:current_mc_paths_choke}. Three
representative paths of $S$ are shown in
panel~(\subref{fig:current_mc_S}), the first of which is divided more
granularly into generator technologies in
panel~(\subref{fig:current_mc_S_mix}). The three corresponding paths
of \(X\) are plotted in panel~(\subref{fig:current_mc_X}).

Along each path, we calculate the prices \(P_t\) from the
market-clearing condition \eqref{eq:price_market_clearing}.
Panel~(\subref{fig:current_mc_P}) exhibits the three corresponding
price paths, while panel~(\subref{fig:current_mc_terminal_price})
shows the full Monte Carlo distribution of terminal prices in 2031.

Quantitatively, the mean terminal price is \$33.43/MWh and the
standard deviation is \$5.87/MWh. In both figures, we emphasize the
possibility of prices falling below the baseline level \(P_0\). The
price-responsive demands in the current-supply scenario are shown in
panel~(\subref{fig:current_mc_true_demands}). Because prices rise
above the reference level when new data centers interconnect,
price-responsive demand shocks are more mild than the reference demand
trajectories suggest. Nevertheless, after accounting for electricity
prices, data centers extract a significant proportion of total
generation capacity. Across the full Monte Carlo sample, the mean
data-center share of price-responsive demand at the terminal date is
27.6\%.

\begin{figure}[htbp]
  \centering
  \begin{subfigure}{0.48\linewidth}
    \centering
    \includegraphics[width=\linewidth]{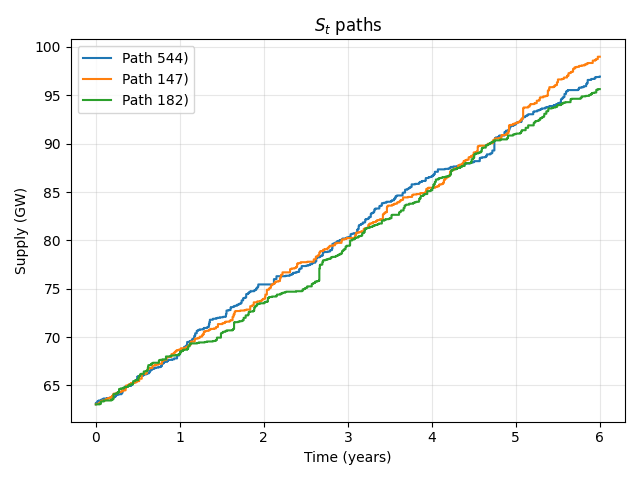}
    \caption{}
    \label{fig:current_mc_S}
  \end{subfigure}
  \begin{subfigure}{0.48\linewidth}
    \centering
    \includegraphics[width=\linewidth]{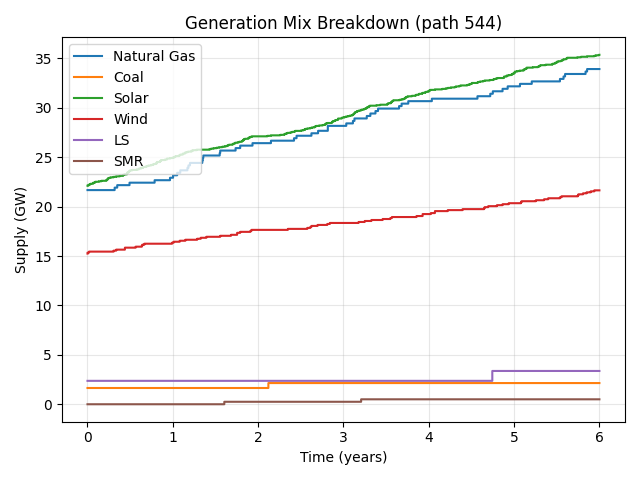}
    \caption{}
    \label{fig:current_mc_S_mix}
  \end{subfigure}
  \begin{subfigure}{0.48\linewidth}
    \centering
    \includegraphics[width=\linewidth]{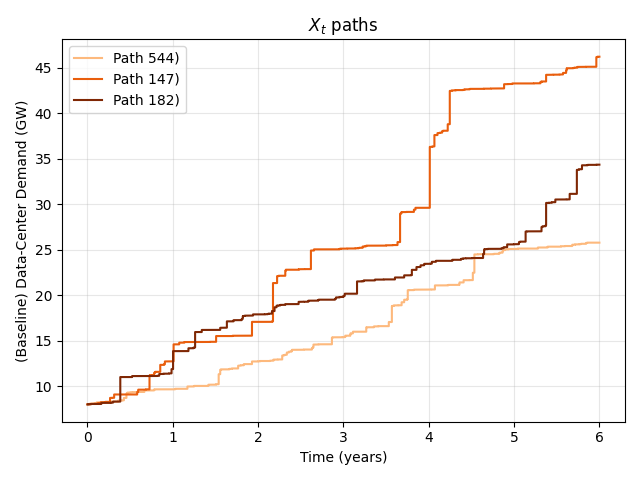}
    \caption{}
    \label{fig:current_mc_X}
  \end{subfigure}
  \begin{subfigure}{0.48\textwidth}
    \centering
    \includegraphics[width=\linewidth]{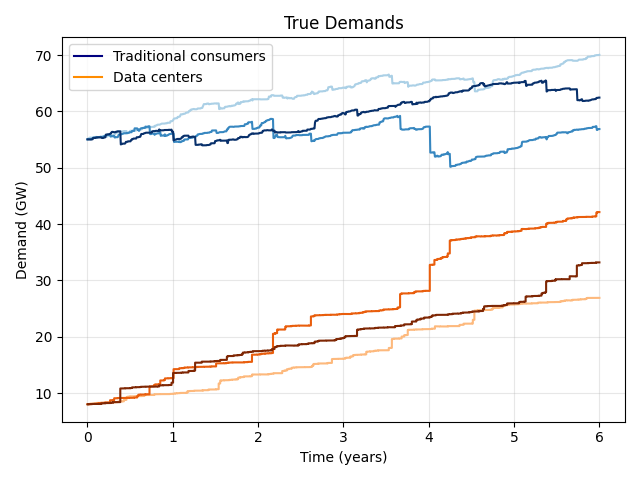}
    \caption{}
    \label{fig:current_mc_true_demands}
  \end{subfigure}
  \begin{subfigure}{0.48\linewidth}
    \centering
    \includegraphics[width=\linewidth]{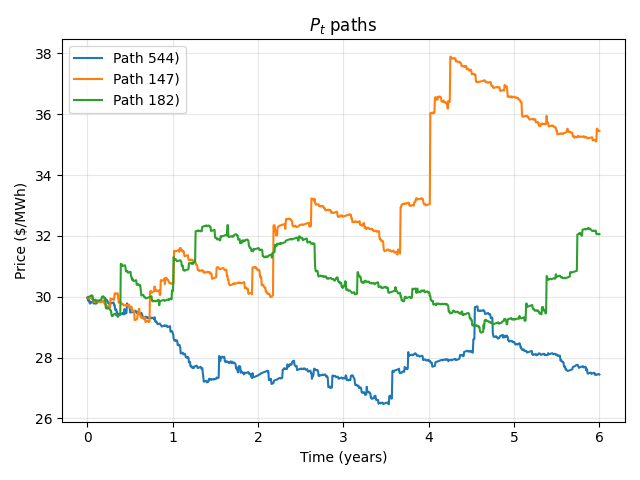}
    \caption{}
    \label{fig:current_mc_P}
  \end{subfigure}
  \begin{subfigure}{0.48\linewidth}
    \centering
    \includegraphics[width=\linewidth]{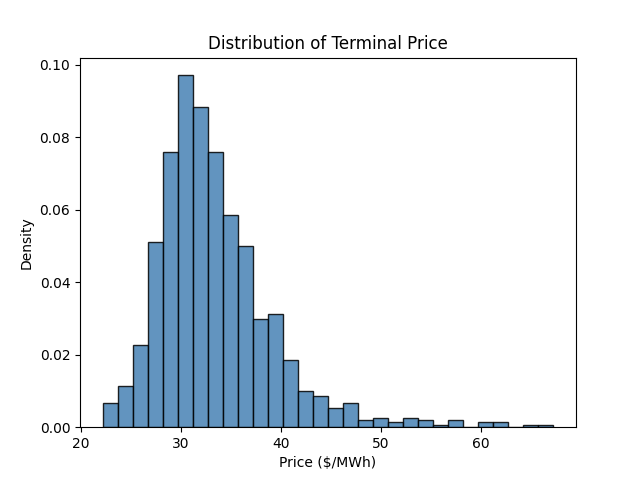}
    \caption{}
    \label{fig:current_mc_terminal_price}
  \end{subfigure}
  \caption{Summary of $1000$ Monte Carlo simulations in the current
    trend scenario (Section~\ref{sec:current_supply}). Top:
    (\subref{fig:current_mc_S}) three simulated paths of total
    generation capacity, and (\subref{fig:current_mc_S_mix}) breakdown
    of one of these capacity paths into generative technologies.
    Middle: (\subref{fig:current_mc_X}) three simulated paths of
    reference data-center demand, and
    (\subref{fig:current_mc_true_demands}) price-responsive demand of
    traditional consumers and data centers. Bottom:
    (\subref{fig:current_mc_P}) three representative price paths, and
    (\subref{fig:current_mc_terminal_price}) the terminal-price
    distribution.}
  \label{fig:current_mc_paths_choke}
\end{figure}

The deterministic growth model in Section
\ref{sec:deterministic_growth} identifies the average capacity growth
needed to stabilize prices, while the stochastic growth model of
Section \ref{sec:stochastic_supply_demand} shows how supply additions
and data-center load arrivals create dispersion around those averages.
Rapid supply expansion can suppress prices and create revenue
cannibalization as seen in panel~(\subref{fig:current_mc_P}), which
may weaken incentives to build capacity even when aggregate load is
rising. These effects motivate the controlled-intensity formulation in
\S\ref{sec:control_intensity}, where a generation investor or capacity
owner chooses supply expansion dynamically rather than it being
exogenously specified.
 \section{Optimal Supply-Side Investment}
\label{sec:control_intensity}

We now develop a stochastic control framework in which a generation
owner chooses the intensity of supply additions under uncertain
data-center load growth. Unlike the exogenous supply processes studied
in Section~\ref{sec:demand_supply_price}, capacity expansion is now
endogenous while retaining uncertain infrastructure completion times.
Such delays may be attributed to, for instance, supply-chain
disruptions or permitting delays. The owner controls the arrival
intensity of new generation capacity rather than its exact
installation date in order to maximize discounted her revenue net of
capacity-expansion costs up to a finite time horizon \(T = 6\),
matching the terminal time in Section~\ref{sec:current_supply}. In
this way, our representative generation owner is an investor whose
portfolio is the available supply \(S_t\). Investment thus affects
both the quantity of generation earning revenue and the
market-clearing price received on the existing portfolio, thereby
capturing the revenue-cannibalization effect directly. Our formulation
focuses on this aggregate investment incentive, leaving the strategic
interaction among individual generators to future work.

There is long-dated and large literature on stochastic control models
for irreversible capacity investment under uncertainty. Classical
approaches typically represent investment through singular controls,
in which capacity is added immediately when the investment decision is
made; see \cite{manne1961,dixit1994}. We instead represent build-out
completion through a controlled counting process, allowing the waiting
time until the next capacity addition to be random and state
dependent. The model also differs from approaches with exogenous
output prices because the electricity price is determined endogenously
by market clearing. Consequently, additional capacity affects both the
quantity available for sale and the price received on the owner's
generation portfolio.

Related controlled-intensity models have appeared in energy and
resource economics, where dynamic oligopoly models use intensity
controls to describe exploration and capacity expansion under
competition, for instance
\cite{LudkovskiSircar2012Exhaustibility,ChanSircar2017Fracking,
  HubertLolasSircar2025CapacityExpansion}, or cryptocurrency mining
\cite{LiReppenSircar2024CryptoMining}, or ticket pricing
\cite{AydinParmaksizSircar2025FareGame}. Our model adapts this
controlled-intensity framework to electricity capacity expansion by
combining uncertain project completion with uncertain data-center load
and an endogenous market-clearing price.

\subsection{Single-Technology Investment}
\label{sec:single_technology}

We first suppose supply is increased only by a single available
technology ($d=1)$, for instance wind. We incorporate multiple asset
types in Section \ref{sec:multiple_technologies}. The state variables
are the available generation capacity $S_t$ and reference data-center
demand \(X_t\), as introduced in
Section~\ref{sec:demand_supply_price}. The reference traditional
demand continues to evolve deterministically according to
$I_t = I_0 e^{\gamma t}$.

For tractability in the stochastic control problem, we simplify the
compound-Poisson specification for data-center load in
Section~\ref{sec:stochastic_supply_demand} by replacing the random
jump sizes $Y_k$ with a representative constant increment
$\kappa = \mathbb{E}[Y]$, while retaining stochastic arrival times.
Thus, reference data-center demand \(X\) evolves as
\begin{equation}
  \label{eq:data_center_process}
  dX_t=\kappa\,dN_t^\mu,
\end{equation}
where $N^\mu$ is a Poisson process with intensity $\mu>0$. On the
supply side, rather than taking the infrastructure build-out process
as exogenous as in Section~\ref{sec:stochastic_supply_demand}, the
investor now controls its arrival intensity:
\begin{equation}
  \label{eq:controlled_supply_process}
  dS_t=\delta\,dN_t^\lambda
\end{equation}
where $N^\lambda$ is a point process whose nonnegative intensity
$\lambda_t\geq0$ is chosen by the investor at each time
\(t\in [0, T]\). The precise admissibility and integrability
conditions are stated in Appendix~\ref{app:hjb}.

Roughly speaking, the effect of choosing intensity \(\lambda_t\geq0\)
at time \(t\) is to make the probability that $N^\lambda$ jumps by one
over a small time period \([t,t+\Delta t]\) of length $\Delta t$ is
approximately $\lambda\Delta t$:
\[
  \P\{N^\lambda_{t+\Delta t} - N^\lambda_t=1\} = \lambda_t\Delta t+o(\Delta t),
\]
while the probability of no increment is
\[
  \P\{N^\lambda_{t+\Delta t} - N^\lambda_t=0\} = 1-\lambda_t\Delta t+o(\Delta t).
\]
This makes the probability of two or more jumps over the short time
period negligibly small. Consequently,
\eqref{eq:controlled_supply_process} says that when $N^\lambda$
increases by one, supply capacity $S$ increases by $\delta$:
\begin{align*}
    \P\{S_{t+\Delta t} - S_t=\delta\} &= \lambda_t\Delta t+o(\Delta t),\\
    \P\{S_{t+\Delta t} - S_t=0\} &= 1-\lambda_t\Delta t+o(\Delta t).
\end{align*}
As such, $\lambda_t$ controls the probability of an immediate jump in
supply at time $t$. It can be viewed as a measure of investment and
effort to build a new generation asset: the larger the investment
intensity over \([t,t_1)\): $(\lambda_s)_{t\leq s<t_1}$, the greater
the likelihood of one (or more) new build-outs between times $t$ and
some $t_1>t$.

However, a higher supply intensity requires greater development effort
and incurs higher costs. We represent this by an increasing convex
cost function $C:\mathbb{R}_{\geq 0}\to\mathbb{R}_{\geq 0}$, where
\(C(\lambda_t)\Delta t\) is the instantaneous cost of maintaining
investment intensity \(\lambda_t\) over the small time interval
$[t,t+\Delta t)$. Having $C$ increasing captures greater effort
incurring greater cost, while convexity captures the concept of
diminishing returns, i.e. the marginal cost of investment is
increasing. We further assume $C(0)=0$ so that the supplier can stop
incurring costs by choosing intensity equal to zero and the technical
assumption \(\lambda\mapsto C(\lambda)\) is differentiable with
\(\lim_{\lambda\to \infty} C'(\lambda) = \infty\), so that the optimal
intensity is always uniquely defined. These are common assumptions in
the project investment and research \& development literature; see,
for instance, \cite{dixit1994}.

The running payoff is producer revenue net of the cost of maintaining
the chosen installation intensity: $S_tP(t,S_t,X_t)-C(\lambda_t)$. We
note that, because we record supply and demand quantities in GW, and
price in $\$/$MWh, the unit of revenue $S\times P$ is $\$1000/$hour,
and cost $C$ is assumed also to be in $\$1000/$hour. A realized supply
jump changes the state from \(S_t\) to \(S_t+\delta\), and hence
affects subsequent producer revenue through both available capacity
and the market-clearing price. Since $I_t$ is deterministic, the
controlled state is $(S_t,X_t)$, with time $t$ entering explicitly. We
stress the dependence of the market-clearing price \(P_t\) determined
by \eqref{eq:price_market_clearing} on the controlled state by writing
$P(t,S_t,X_t)$.

The generation investor maximizes over intensity
$(\lambda_t)_{t\in[0,T)}$ their expected discounted revenue minus
cost, up to a finite time horizon \(T\):
\begin{equation}
  \label{eq:value1}
  \mathbb{E}\left\{ \int_0^T e^{-ru}\,8760\times[S_u P(u, S_u, X_u) - C(\lambda_u)]\, \mathrm{d}u\right\},
\end{equation}
where the $8760$ hours/year adjust the time units of revenue and cost.
Here, $r>0$ is an annualized rate at which future profits are
discounted.

To use dynamic programming to solve this stochastic control problem,
we define the the value function $v:[0,T]\times\R_+\times\R_+\to\R_+$
by
\begin{equation}
  \label{eq:single_value_function}
  v(t, s, x) = \sup_{\lambda\in \Lambda} \mathbb{E}\left\{ \int_t^T e^{-r(u-t)}\left[S_u P(u, S_u, X_u) - C(\lambda_u)\right]\, \mathrm{d}u \mid S_t=s, X_t=x \right\},
\end{equation}
where \(\Lambda\) denotes the admissible class defined in
Appendix~\ref{app:hjb}, and we have divided by $8760$ in our
definition of $v$, which cancels the conversion factor in
\eqref{eq:value1}. We note that $v\geq0$ because prices $P$ and supply
$S$ are non-negative and doing nothing ($\lambda_t\equiv0$) is an
admissible costless strategy. The function $v$ encodes the optimal
value of potential future profits for the supply investor when the
program starts at time $t\in[0,T]$ with current supply level $s>0$ and
data-center demand at the reference price equal to $x>0$.

The dynamic programming principle gives the single-technology
Hamilton-Jacobi-Bellman (HJB) differential equation for $v$:
\begin{equation}
  \label{eq:hjb_single}
  \partial_t v(t,s,x) +\mu\Delta_xv(t,s,x) +\sup_{\lambda\geq0} \left\{\lambda\Delta_sv(t,s,x)-C(\lambda)\right\} +sP(t,s,x)-rv(t,s,x)=0,
\end{equation}
with $v(T,s,x)=0$.  The differences
\[
  \Delta_s v(t,s,x) \coloneqq v(t,s+\delta,x)-v(t,s,x), \qquad
  \Delta_x v(t,s,x) \coloneqq v(t,s,x+\kappa)-v(t,s,x),
\]
denote the changes in the value function resulting from one additional
supply-capacity completion and one additional data center arrival,
respectively. The optimal investment intensity is given by
\[
  \lambda^*(t,s,x)= \begin{cases}
    0, & \Delta_s v(t,s,x) \leq C'(0), \\[4pt]
    (C')^{-1}\!\left(\Delta_s v(t,s,x)\right), & \Delta_s v(t,s,x) > C'(0).
  \end{cases}
\]

To interpret \eqref{eq:hjb_single}, we state the role of each term.
The derivative \(\partial_t v\) records the passage of time;
\(\mu \Delta_x v\) captures the expected change in value arising from
a new increment in reference data-center demand;
\(\lambda \Delta_s v\) analogously measures the same for a new
generator; \(C(\lambda)\) is the cost explained above; current revenue
is \(sP\); and \(rv\) is a consequence of discounting future returns.
The derivation of equation \eqref{eq:hjb_single} and verification
theorem that its solution recovers the value of the stochastic control
problem \eqref{eq:single_value_function} are given in
Appendix~\ref{app:hjb}.

We work with a power-cost specification
\begin{equation}
  \label{eq:powercosts}
  C(\lambda) =\frac{1}{\beta}\lambda^\beta+\rho\lambda,
  \qquad \beta>1,\; \rho\geq0.
\end{equation}
If $\rho>0$ in \eqref{eq:powercosts}, then $C'(0)=\rho$ represents a
positive marginal cost of initiating investment effort. With this
specification, the optimal intensity is unique and given by
\begin{equation}
  \label{eq:optimal_supply_intensity}
  \lambda^*(t,s,x)  =
  \bigl(\Delta_sv(t,s,x)-\rho\bigr)_+^{1/(\beta-1)}.
\end{equation}
Investment is positive only when the incremental value of another
capacity addition exceeds the threshold \(\rho\). The curvature
parameter \(\beta\) governs how strongly the optimal investment
intensity responds to that incremental value.

\subsubsection{Numerical results}
\label{sec:single_results}

For the single-technology model, we solve the HJB equation
\eqref{eq:hjb_single} backward in time on a discrete grid for
available supply and reference data-center demand. At each time step,
the investment intensity is updated from the current marginal value of
an additional capacity increment. Holding this policy fixed, a
semi-implicit Euler step produces a sparse linear system for the value
function. Full discretization and implementation details are provided
in Appendix~\ref{app:numerics}.

We now present the numerical results of the model using the parameters
reported in Table~\ref{tbl:parameters_single_numerical}. Price
elasticities are determined using the demand response functions in
Section \ref{sec:demand_response_functions}. The single-technology
results address when investment is attractive and how revenue
cannibalization can limit further investment.
\begin{table}[htb]
  \centering
  \begin{tabular}{|l|l|}
    \hline
    Parameter & Value \\
    \hline
    Generator size \((\delta)\) & 100 MW \\
    \hline
    Reference traditional demand growth rate \((\gamma)\) & 3\% \\
    \hline
    Reference data-center demand increment (\(\kappa\)) & 225 MW \\
    \hline
    Data center intensity \((\mu)\) & \(6 / 0.225\approx 26\) per year \\
    \hline
    Cost power \((\beta)\) & \(2\) \\
    \hline
    Cost linear \((\rho)\) & \(0\) \\
    \hline
    Discount rate \((r)\) & 3\% \\
    \hline
  \end{tabular}
  \caption{Parameters in the single supply numerical simulations.}
  \label{tbl:parameters_single_numerical}
\end{table}

Figure~\ref{fig:single_value_and_policy} shows how the value function
and the corresponding optimal investment policy vary with available
supply and reference data-center demand.
\begin{figure}[H]
  \centering
  \begin{subfigure}[t]{0.48\textwidth}
    \centering
    \includegraphics[width=\textwidth]{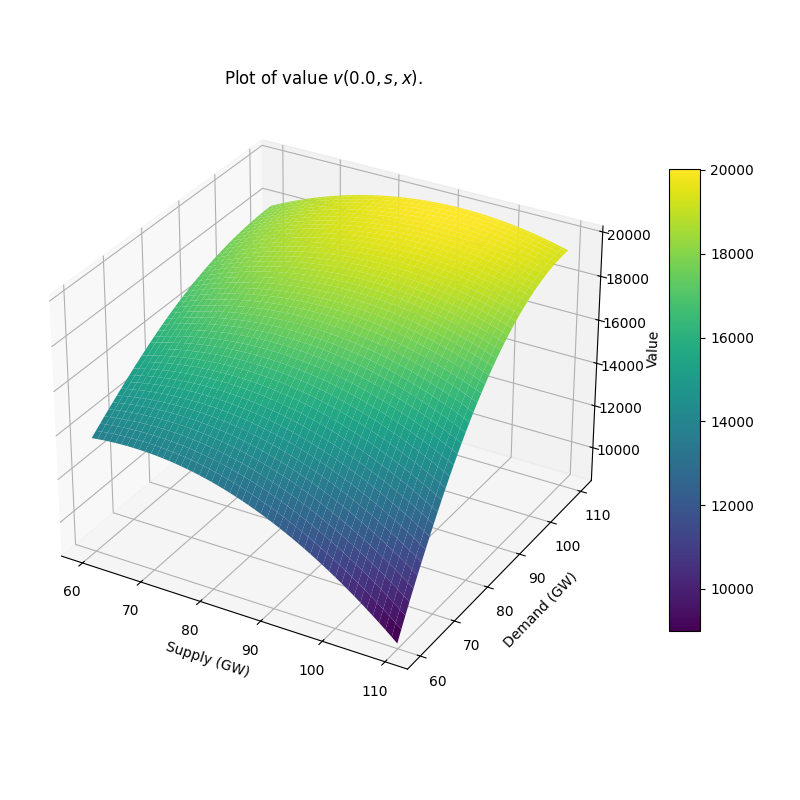}
    \caption{Value function \(v(0,s,x)\).}
    \label{fig:value_surface}
  \end{subfigure}
  \hfill
  \begin{subfigure}[t]{0.48\textwidth}
    \centering
    \includegraphics[width=\textwidth]{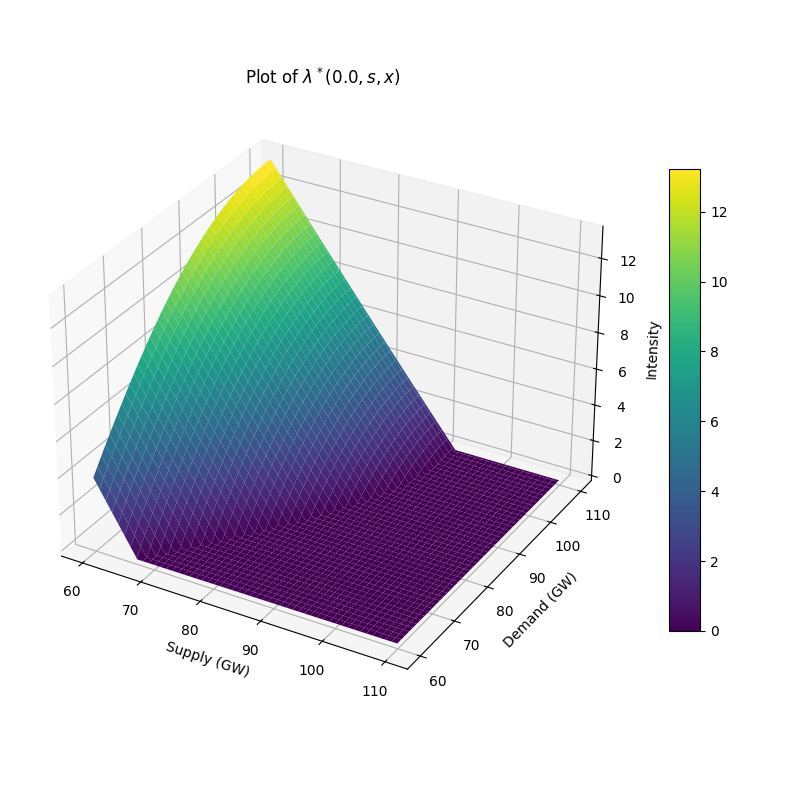}
    \caption{Optimal intensity \(\lambda^*(0,s,x)\).}
    \label{fig:lambda_surface}
  \end{subfigure}
  \vspace{0.4cm}
  \begin{subfigure}[t]{0.48\textwidth}
    \centering
    \includegraphics[width=\textwidth]{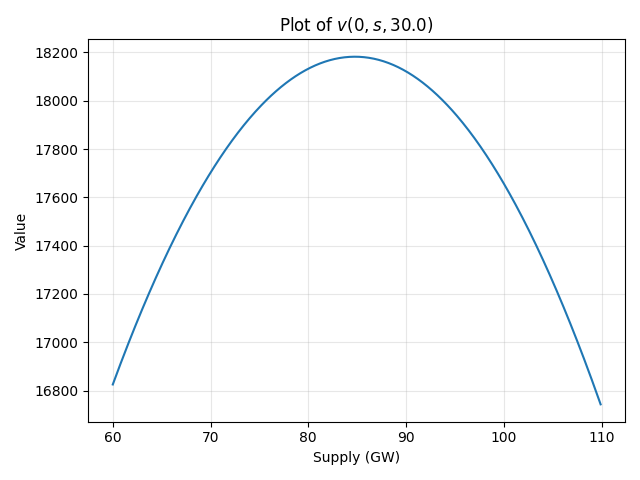}
    \caption{Cross-section \(v(0,s,30)\).}
    \label{fig:value_cross_section}
  \end{subfigure}
  \hfill
  \begin{subfigure}[t]{0.48\textwidth}
    \centering
    \includegraphics[width=\textwidth]{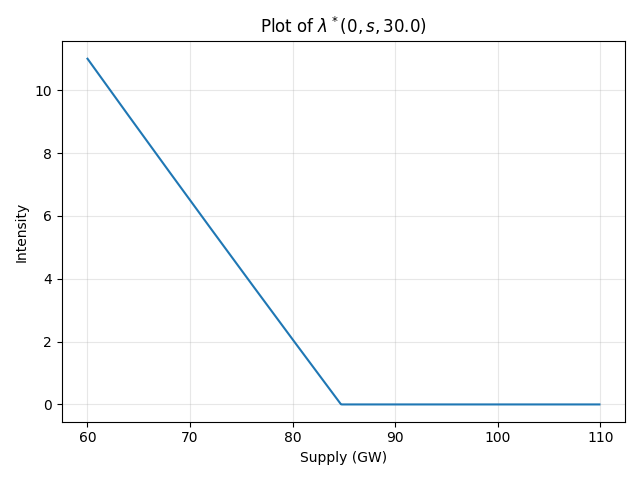}
    \caption{Cross-section \(\lambda^*(0,s,30)\).}
    \label{fig:lambda_cross_section}
  \end{subfigure}
  \caption{Single-technology numerical results at the initial time
    using the calibration in
    Table~\ref{tbl:parameters_single_numerical}.
    Panels~(\subref{fig:value_surface})
    and~(\subref{fig:lambda_surface}) show the value function and
    corresponding optimal investment intensity over the state space,
    while Panels~(\subref{fig:value_cross_section})
    and~(\subref{fig:lambda_cross_section}) show the corresponding
    cross-sections at \(x=30\). In the surface plots, the demand is
    the reference demand from traditional consumers plus data
    centers.}
  \label{fig:single_value_and_policy}
\end{figure}
Panels~(\subref{fig:value_surface})
and~(\subref{fig:value_cross_section}) show that the value function
increases with reference data-center demand, which reflects the
greater revenue opportunity created by additional electricity demand.
As a function of supply, value increases up to a point and then
decreases. Panels~(\subref{fig:lambda_surface})
and~(\subref{fig:lambda_cross_section}) show the optimal investment
intensity from formula \eqref{eq:optimal_supply_intensity}. Investment
is concentrated in states with relatively low available supply and
sufficiently high reference data-center demand. In these states,
scarcity keeps the marginal value of new capacity above the cost
threshold \(\rho\). Once available supply rises beyond the region in
which another increment has sufficient value, the optimal intensity
falls to zero. Thus a high portfolio value does not by itself imply
continued investment; investment depends on the incremental value of
the next capacity addition.

\subsubsection{Monte Carlo simulations}
\label{sec:single_simulation}

The Monte Carlo simulations summarized in
Figure~\ref{fig:single_supply_mc_choke} illustrate how demand growth,
capacity additions, and investment incentives interact under the
optimal policy.
\begin{figure}[htbp]
  \centering
  \begin{subfigure}{0.48\textwidth}
    \centering
    \includegraphics[width=\textwidth]{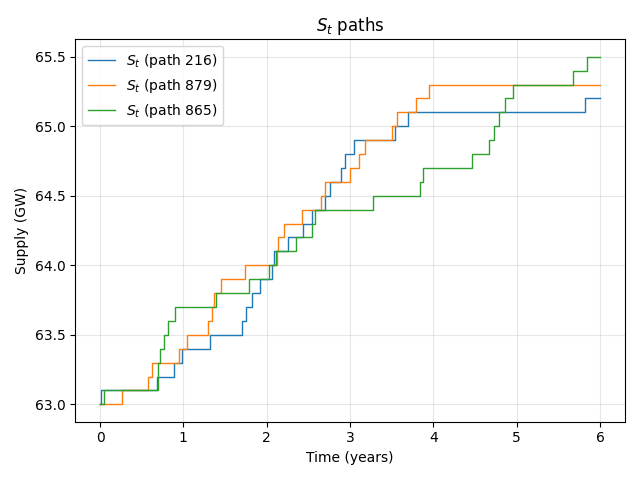}
    \caption{}
    \label{fig:revenue_cannibalization_a}
  \end{subfigure}
  \begin{subfigure}{0.48\textwidth}
    \centering
    \includegraphics[width=\textwidth]{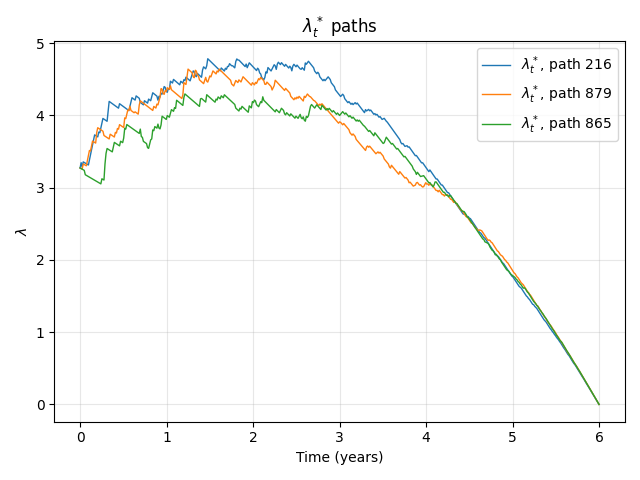}
    \caption{}
    \label{fig:revenue_cannibalization_b}
  \end{subfigure}
  \begin{subfigure}{0.48\textwidth}
    \centering
    \includegraphics[width=\textwidth]{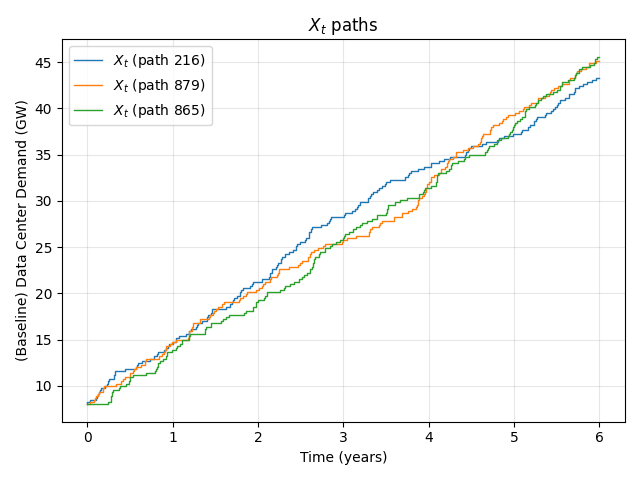}
    \caption{}
    \label{fig:revenue_cannibalization_c}
  \end{subfigure}
  \begin{subfigure}{0.48\textwidth}
    \centering
    \includegraphics[width=\textwidth]{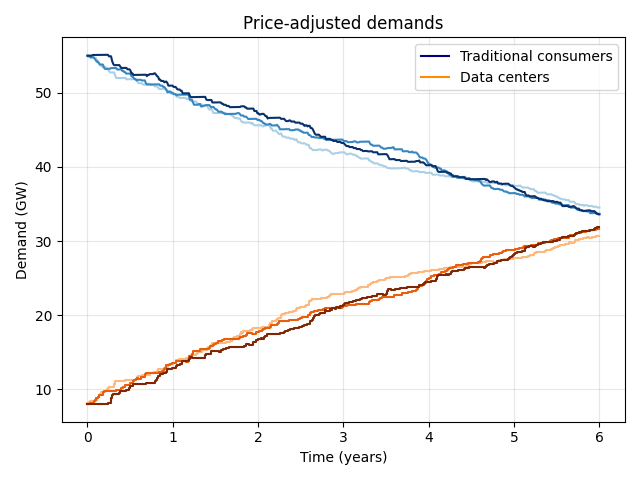}
    \caption{}
    \label{fig:revenue_cannibalization_d}
  \end{subfigure}
  \begin{subfigure}{0.48\textwidth}
    \centering
    \includegraphics[width=\textwidth]{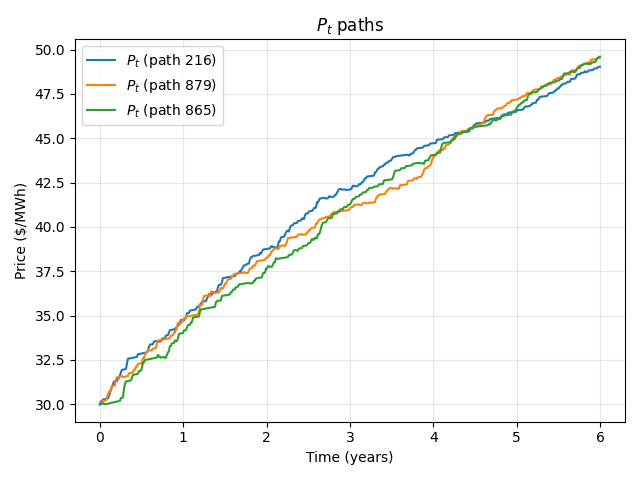}
    \caption{}
    \label{fig:revenue_cannibalization_e}
  \end{subfigure}
  \begin{subfigure}{0.48\textwidth}
    \centering
    \includegraphics[width=\textwidth]{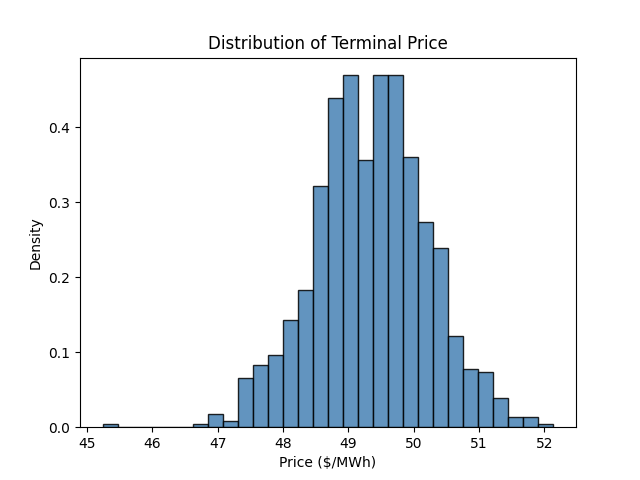}
    \caption{}
    \label{fig:revenue_cannibalization_f}
  \end{subfigure}
  \caption{Summary of single-technology Monte Carlo simulation
    \((N=1000)\) under the parameters in
    Table~\ref{tbl:parameters_single_numerical}. Top: (a) three
    simulated paths of total generation capacity \(S_t\), and (b) the
    corresponding optimal intensity paths \(\lambda_t^*\). Middle: (c)
    three simulated paths of reference data-center demand \(X_t\), and
    (d) price-responsive demand of traditional consumers and data
    centers. Bottom: (e) three simulated price paths, and (f)
    simulated terminal-price distribution.}
  \label{fig:single_supply_mc_choke}
\end{figure}
Panel~(\subref{fig:revenue_cannibalization_a}) shows three controlled
paths of available generation capacity, with discrete increases
corresponding to build-out completions. The associated optimal
investment intensities are shown in
panel~(\subref{fig:revenue_cannibalization_b}), which demonstrates the
combined effects of capacity accumulation and the shortening remaining
investment horizon.

On the demand side, panel~(\subref{fig:revenue_cannibalization_c})
shows that reference data-center demand continues to increase through
repeated exogenous arrivals. Thus, the decline in optimal investment
intensity can occur even while reference data-center demand continues
to grow. Panel~(\subref{fig:revenue_cannibalization_d}) shows the
corresponding price-responsive demands of traditional consumers and
data centers, while panel~(\subref{fig:revenue_cannibalization_e})
shows the resulting market-clearing price paths.

We highlight the opposing effects of rising reference data-center
demand and additional generation capacity: demand arrivals place
upward pressure on the market-clearing price while completed supply
investments reduce scarcity and place downward pressure on it. The
terminal-price distribution across the full Monte Carlo sample is
shown in panel~(\subref{fig:revenue_cannibalization_f}). The observed
dispersion in terminal prices in
panel~(\subref{fig:revenue_cannibalization_f}) illustrates optimal
investment does not eliminate price uncertainty since the timing of
demand arrivals and project completions remains stochastic.

More importantly, the decline of the optimal intensity in
panel~(\subref{fig:revenue_cannibalization_b}) despite continued
growth in panel~(\subref{fig:revenue_cannibalization_c}) illustrates
the revenue-cannibalization mechanism. Adding capacity supports
additional sales but also lowers the market-clearing price earned on
the owner's existing generation portfolio, eventually reducing the
incremental value of further investment.

\subsubsection{Explicit Formula Example}
\label{sec:no_investment}

As we have seen, higher reference data-center demand raises the
market-clearing price and generally increases the value of additional
capacity, while greater available supply lowers the price. The
investment incentive is therefore governed by a tradeoff: additional
capacity creates revenue from new output, but also depresses the
market-clearing price earned on the producer's existing capacity. In
sufficiently well-supplied states, this revenue-cannibalization effect
can make the incremental value of further capacity nonpositive even
while reference data-center demand continues to grow. We give here an extreme analytical example where there is complete
cannibalization.

We suppose a slightly different type of demand model in which price
response {\em additively} changes the hyperscaler and traditional
demands $I$ and $X$ at the reference price, rather than
multiplicatively as in \eqref{eq:demand_general}. Specifically we take
$I_t\equiv I_0$ ($\gamma=0$) and
\[
  D_1(I_t,P_t) = I_0-\frac12\alpha\log(P_t/P_0),\qquad D_2(X_t,P_t) =
  X_t-\frac12\alpha\log(P_t/P_0),
\]
where $\alpha>0$ is a conversion parameter in GW. In this model,
demand is equated to risk-adjusted supply given by
$S_t+\alpha\log(S_t/S_b)$, where $S_b>0$ is ``large" so the
reliability adjustment is negative for $S_t<S_b$. As supply capacity
increases, supply is treated as more reliable and the modification is
smaller.

Then the demand-supply market clearing condition
\[
  I_0-\frac12\alpha\log(P_t/P_0) + X_t-\frac12\alpha\log(P_t/P_0) =
  S_t+\alpha\log(S_t/S_b)
\]
gives $P_t=ke^{(X_t-S_t)/\alpha}/S_t$, where $k$ collects the
constants $(I_0, P_0, S_b)$. We assume for simplicity $\alpha=k=1$.
Therefore we have the revenue at time $t$ is given by
$S_tP_t = e^{X_t-S_t}$. It is exponentially increasing in hyperscaler
demand, and exponentially decreasing in installed capacity.

We further suppose that cost of supply investment depends on $X_t$ and
$S_t$ as well in the same functional form:
\[
  C(\lambda)\mapsto C(\lambda; X_t,S_t) = e^{X_t-S_t} c(\lambda),
\]
where \(c\) is increasing, strictly convex and $c(0)=0$. So the cost
of building new supply is high when hyperscaler demand is high because
everyone is trying to get in, but it is low when supply is already
high. We also assume $c'(0)=0$.

Then \eqref{eq:hjb_single} becomes
\begin{equation}
  \label{eq:hjb_exponential_price}
  \partial_tv(t,s,x)+ \sup_{\lambda \geq 0} \left[ \lambda \Delta_s v(t,s,x) - e^{x-s} c(\lambda) \right ] + \mu \Delta_x v(t,s,x) + e^{x-s} - rv(t,s,x)=0.
\end{equation}
We look for a solution of the form
\begin{equation}
  v(t, s, x) = e^{x-s} g(t),\label{ansatz}
\end{equation}
for some function \(g\) to be found. Note that we have
\begin{align*}
  \Delta_s v(t,s,x) &= v(t,s+\delta,x) - v(t,s,x) = e^{x-s} g(t) (e^{-\delta}-1), \\
  \Delta_x v(t,s,x) &= v(t,s,x+\kappa) - v(t,s,x)= e^{x-s} g(t) (e^{\kappa}-1).
\end{align*}
The optimization problem is
\[
  e^{(x-s)} \sup_{\lambda\geq 0} \left[ \lambda g(t)( e^{-\delta}-1 )
    - c(\lambda) \right],
\]
which gives
\begin{equation}
  \label{eq:lam-opt_exponential_price}
  \lambda_t^* = \begin{cases}
    0, & g(t)\bigl(e^{-\delta}-1\bigr)\le 0, \\[4pt]
    (c')^{-1}\!\left(g(t)\bigl(e^{-\delta}-1\bigr)\right), & g(t)\bigl(e^{-\delta}-1\bigr)>0.
\end{cases}
\end{equation}
Substituting the \textit{ansatz} \eqref{ansatz} into
\eqref{eq:hjb_exponential_price} gives the nonlinear ODE
\begin{equation}
  \label{eq:g_nonlinear_ode}
  g'(t) +  g(t)\lambda^*(t) (e^{-\delta}-1) - c(\lambda^*(t))+ \mu
  g(t)(e^{\kappa}-1)+1 - rg(t)=0, \qquad
  g(T) = 0.
\end{equation}

From \eqref{eq:lam-opt_exponential_price}, in order that
$\lambda^*_t>0$, we must have $g(t)<0$. However, since $g(T^-)=0$, we
have $\lambda^*_{T^-}=0$ and therefore $g'(T^-)=-1$, which means $g$
is decreasing to zero as $t\uparrow T$. If at some earlier time
\(t_0<T\) we had \(g(t_0)=0\), then \(\lambda^*_{t_0}=0\) and the same
argument gives $g'(t_0^-)=-1$. Hence \(g(t)>0\) for \(t<t_0\)
sufficiently close to \(t_0\), so \(g\) cannot cross from nonnegative
values into negative values as the equation is solved backward from
\(T\). Consequently, $g(t)\geq0$ and $\lambda^*_t=0$ for all $t$.

This example illustrates revenue cannibalization: additional supply
depresses the market-clearing price sufficiently that a
revenue-maximizing generation owner optimally chooses never to invest,
even when demand is growing.

\subsection{Multiple Supply Technologies}
\label{sec:multiple_technologies}

We now extend the single-technology framework by allowing the investor
to allocate investment effort across \(d\) generation technologies.
For each technology \(j=1,\ldots,d\), \(\delta_j\) is the capacity
delivered by one completed project, \(\lambda_{j,t}\geq0\) is the
controlled completion intensity, and \(C_j\) is the
technology-specific investment cost. The aggregate controlled supply
process is
\[
  dS_t = \sum_{j=1}^d \delta_j\,dN_t^{\lambda_j}.
\]
The value function becomes
\[
  v(t,s,x) = \sup_{\boldsymbol{\lambda}\in\Lambda}
  \mathbb{E}\left[\int_t^T e^{-r(u-t)} \left(S_uP(u,S_u,X_u) -
      \sum_{j=1}^d C_j(\lambda_{j,u}) \right)\,du \;\middle|\; S_t=s,\
    X_t=x \right].
\]
The multi-technology HJB is
\begin{equation}
  \label{eq:multi_control_hjb}
  \partial_t v(t,s,x) + \sum_{j=1}^d \sup_{\lambda_j\ge 0} \left\{\lambda_j\Delta_{s_j}v(t,s,x)-C_j(\lambda_j) \right\} + \mu\Delta_xv(t,s,x) + sP(t,s,x) - rv(t,s,x) =0,
\end{equation}
with \(v(T,s,x)=0\), where
\[
  \Delta_{s_j}v(t,s,x) = v(t,s+\delta_j,x)-v(t,s,x), \quad\mbox{and
  }\quad \Delta_xv(t,s,x) = v(t,s,x+\kappa)-v(t,s,x).
\]
A multivariate derivation and verification argument are given in
Appendix~\ref{app:hjb}.

For the power-cost specification
\[
  C_j(\lambda_j) = \frac{1}{\beta_j}\lambda_j^{\beta_j} +
  \rho_j\lambda_j,
\]
the optimal intensity for technology \(j\) is
\[
  \lambda_j^*(t,s,x) = \left(\Delta_{s_j}v(t,s,x)-\rho_j
  \right)_+^{1/(\beta_j-1)}.
\]
Investment in technology \(j\) is positive only when the incremental
value of another completed asset exceeds the threshold \(\rho_j\).
Technologies differ both in the size \(\delta_j\) of a completed
build-out and in the cost of raising its completion intensity. The
investor therefore allocates effort by comparing the marginal value of
each technology-specific capacity addition with the corresponding
marginal investment cost.

Because the horizon is finite and future revenues are discounted,
technologies whose expected benefits arrive too late relative to the
remaining horizon may receive little or no investment effort.

\subsubsection{Numerical results}
\label{sec:multi_results}

The multi-technology HJB is solved using the same backward
semi-implicit scheme as in the single-technology case, except that the
policy update is performed separately for each \(\lambda_j\). Full
implementation details are provided in Appendix~\ref{app:numerics}.

\paragraph{Technology choice.}
We run experiments in the case when there are multiple supply
technologies. The parameters used in these experiments are presented
in Table~\ref{tbl:multi_parameters}. The multi-technology results ask
how the investor allocates effort when technologies differ in capacity
increments and cost parameters.

Figure~\ref{fig:multi_lam-opt_numerical_choke} reports the optimal
investment intensity for each generation technology as a function of
available supply and reference data-center demand at the initial time.
The panels therefore show how the same market state can lead to
different technology-specific project-completion intensities after
solving \eqref{eq:multi_control_hjb}.

\begin{table}[htbp]
  \centering
  \begin{tabular}{|l|l|}
    \hline
    Parameters & Value \\
    \hline
    Completed-project capacity increments
    \((\boldsymbol{\delta})\) &
    \((250,500,50,100,1000,250)\ \mathrm{MW}\) \\
    \hline
    Reference traditional demand growth rate \((\gamma)\) & 3\% \\
    \hline
    Reference data-center demand increment \((\kappa)\) & 225 MW \\
    \hline
    Cost power \((\boldsymbol{\beta})\) & \((2,2,2,2,2,2)\) \\
    \hline
    Cost linear \((\boldsymbol{\rho})\) & \((4, 25, 0, 0, 50, 10)\) \\
    \hline
    Discount rate \((r)\) & 3\% \\
    \hline
  \end{tabular}
  \caption{Parameters used in the multiple supply technologies
  numerical experiments.}
  \label{tbl:multi_parameters}
\end{table}

The entries of \(\boldsymbol{\delta}\), \(\boldsymbol{\beta}\), and
\(\boldsymbol{\rho}\) are ordered as natural gas, coal, solar, wind,
large-scale nuclear, and SMR nuclear. The technology-specific cost
parameters are reduced-form, illustrative parameters that jointly
represent capital requirements, development difficulty, and the effort
required to increase the expected project-completion rate.
\begin{figure}[htbp]
  \centering
  \includegraphics[width=0.65\textwidth]
  {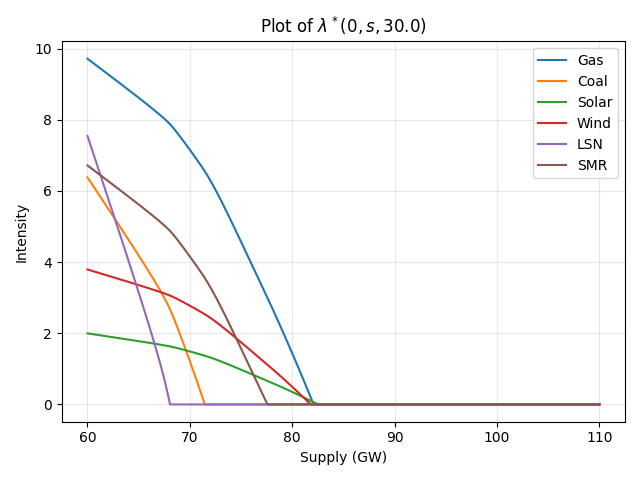}
  \caption{Optimal investment intensities across generation technologies
  as a function of available supply, holding
  reference data-center demand fixed, using the
  parameters in Table~\ref{tbl:multi_parameters}.}
  \label{fig:multi_lam-opt_numerical_choke}
\end{figure}

Figure~\ref{fig:multi_lam-opt_numerical_choke} compares the optimal
intensities across technologies at a common level of reference
data-center demand. The comparison shows that technology choice is
state dependent. At low available supply, the optimal intensity
incorporates all technologies but prioritizes high capacity projects,
namely natural gas, both types of nuclear, and coal, to try to catch
up to demand quickly. As available supply ramps up, large-scale
nuclear and coal, the highest cost but highest capacity technologies,
fall to no investment while relatively inexpensive natural gas, SMR
nuclear, solar, and wind all continue to attract investment until
available supply meets demand. High cost projects have rapidly
decreasing marginal reward as the deficit between available supply and
demand narrows.

These differences reflect both the technology-specific capacity
increments \(\delta_j\) and the corresponding investment-cost
parameters in Table~\ref{tbl:multi_parameters}. In particular, a
larger optimal project-completion intensity does not necessarily imply
a larger rate of capacity addition, since each completed project
contributes a different amount \(\delta_j\) of generation capacity.
Thus, Figure~\ref{fig:multi_lam-opt_numerical_choke} compares optimal
investment effort across technologies rather than simply comparing
nameplate capacities or levelized generation costs.

\paragraph{Controlled investment paths}
We conclude the section with Monte Carlo simulations in the
multi-technology case, summarized in
Figure~\ref{fig:multi_supply_mc_choke}.
Panel~(\subref{fig:multi_mc_a}) shows total available
generation-capacity paths, and panel~(\subref{fig:multi_mc_b}) shows
the technology-specific optimal investment intensities. The exogenous
reference data-center demand paths are shown in
Panel~(\subref{fig:multi_mc_c}), while panel~(\subref{fig:multi_mc_d})
reports the corresponding price-responsive demands. The
market-clearing price paths are shown in
panel~(\subref{fig:multi_mc_e}), and panel~(\subref{fig:multi_mc_f})
gives the terminal-price distribution.

\begin{figure}[htbp]
  \centering
	  \begin{subfigure}{0.48\textwidth}
	    \includegraphics[width=\textwidth]{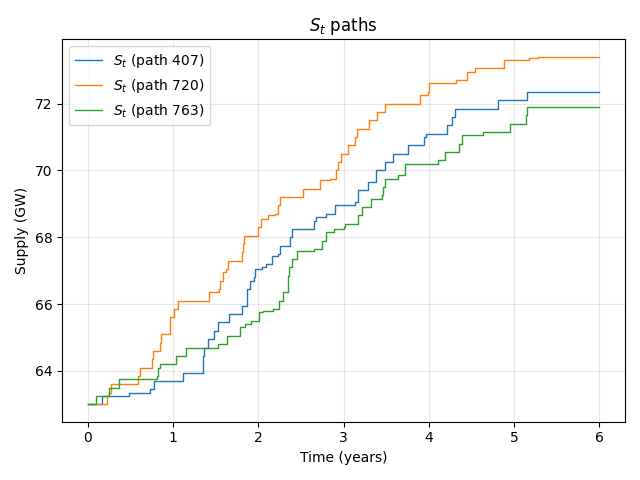}
	    \caption{}\label{fig:multi_mc_a}
	  \end{subfigure}
	  \begin{subfigure}{0.48\textwidth}
	    \includegraphics[width=\textwidth]{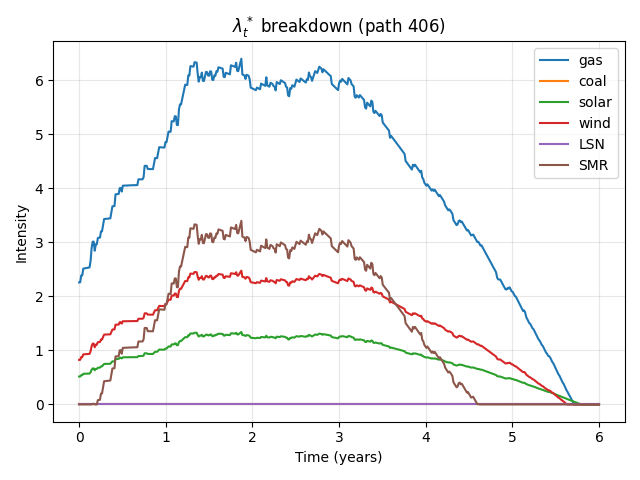}
	    \caption{}\label{fig:multi_mc_b}
	  \end{subfigure}
	  \begin{subfigure}{0.48\textwidth}
	    \includegraphics[width=\textwidth]{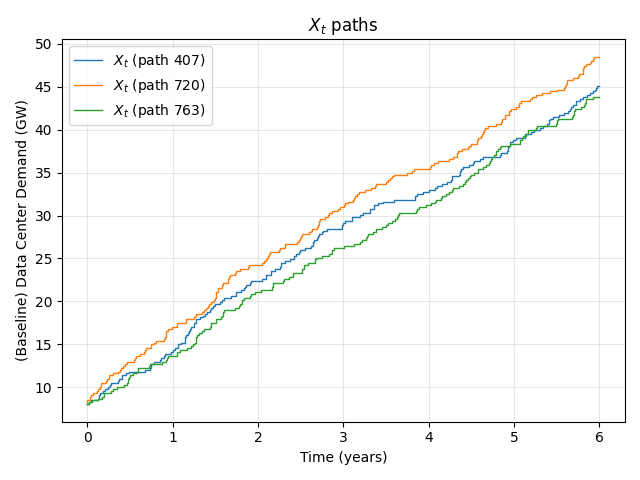}
	    \caption{}\label{fig:multi_mc_c}
	  \end{subfigure}
	  \begin{subfigure}{0.48\textwidth}
	    \includegraphics[width=\textwidth]{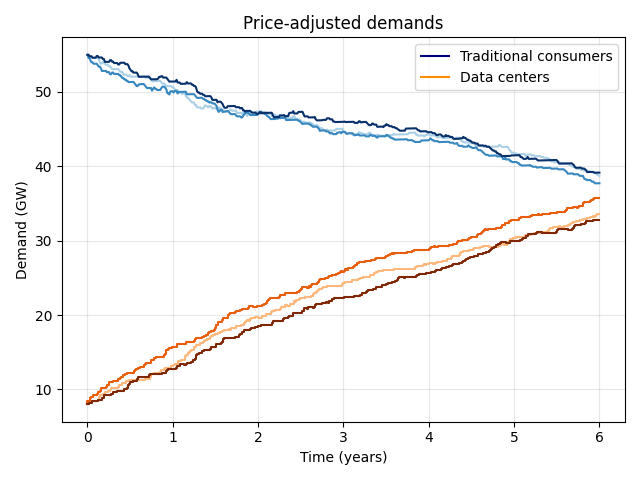}
	    \caption{}\label{fig:multi_mc_d}
	  \end{subfigure}
	  \begin{subfigure}{0.48\textwidth}
	    \includegraphics[width=\textwidth]{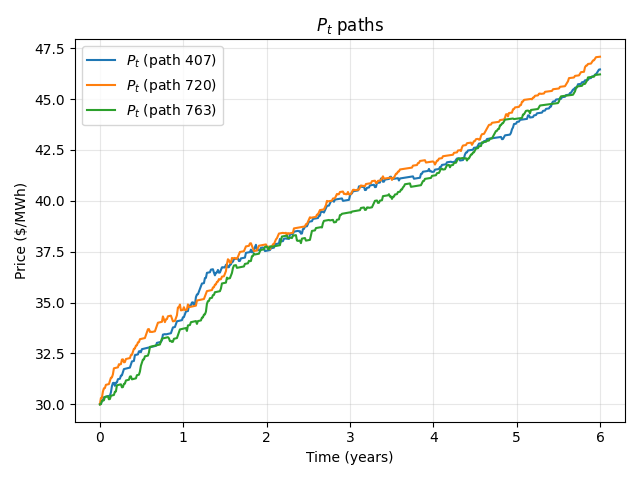}
	    \caption{}\label{fig:multi_mc_e}
	  \end{subfigure}
	  \begin{subfigure}{0.48\textwidth}
	    \includegraphics[width=\textwidth]{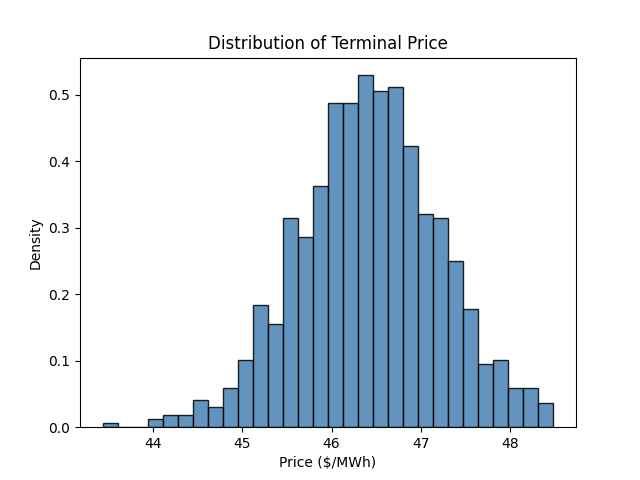}
	    \caption{}\label{fig:multi_mc_f}
	  \end{subfigure}
      \caption{Summary of multi-technology Monte Carlo simulation
        \((N = 1000)\) under the parameters in
        Table~\ref{tbl:multi_parameters}. Top: (a) three simulated
        paths of total generation capacity \(S_t\), and (b) intensity
        mix breakdown \(\boldsymbol{\lambda}^*_t\) of one simulated
        \(S_t\) path. Middle: (c) three simulated paths of reference
        data-center demand \(X_t\), and (d) price-responsive demand of
        traditional consumers and data centers. Bottom: (e) three
        simulated price paths, and (f) simulated terminal-price
        distribution.}
  \label{fig:multi_supply_mc_choke}
\end{figure}

The simulated paths show how the investor reallocates effort across
technologies as the state evolves. Increases in
reference data-center demand raise the value of
additional capacity, while completed capacity additions increase supply
and reduce the marginal value of later additions. Technologies respond
differently because they differ in project size and investment cost.
Demand arrivals put upward pressure on price, while capacity
completions put downward pressure on price. Dynamically reallocating
investment across technologies does not eliminate terminal-price
uncertainty: the timing of demand arrivals and project completions still
generates a range of possible market outcomes. The main difference from
the single-technology model is the additional technology-choice margin,
 \section{Conclusion}
\label{sec:conclusion}

We have developed a framework linking data-center load growth,
generation capacity, and market-clearing electricity prices. Absent a
supply response, data-center demand growing at ERCOT's projected pace
roughly doubles the average wholesale price within a decade, and the
discrete, uncertain arrival of load and capacity turns any point
forecast into a wide distribution of outcomes. Endogenizing investment
then shows that because each completed project lowers the price earned
on the investor's entire portfolio, optimal investment intensity falls
toward zero as capacity accumulates even while data-center load
continues to arrive. Under our calibrations, prices rise from \$30/MWh
to roughly \$46–49/MWh over six years despite optimal investment.

These conclusions should be read in contrast to recent publications
arguing that new data-center loads can, and in some cases have,
lowered costs to residential customers, particularly where supply is
already available. A May 2026 white paper by the Energy Systems
Integration Group and the Brattle Group explains that utilities with
unused capacity can lower rates because the fixed costs of the grid
are spread over a larger demand base \cite{esig2026ratelargeLoads}. An
October 2025 regression study by Lawrence Berkeley National Laboratory
and the Brattle Group found that, from 2019 to 2024, states with the
highest load growth ``experienced reductions in real prices,'' whereas
states with contracting loads generally saw prices rise
\cite{wiser2025factors}. Using an instrumental-variables approach, an
EPRI and Watershed working paper estimates that data centers modestly
reduced average residential rates between 2015 and 2024 through the
same fixed-cost-spreading mechanism \cite{watten2026datacenters}.

The past decade, however, may not be a reliable guide to the effects
of new large loads in the future. This evidence is retrospective and
estimated over a period in which aggregate load growth was modest by
the standard of current forecasts and existing capacity was
underutilized. It says little about wholesale energy and capacity
prices, which are set by supply–demand clearing, passing a higher cost
to every customer once supply is tight. The EPRI and Watershed authors
themselves caution that emerging supply constraints could reverse the
effect they identify \cite{watten2026datacenters}.

With forecasts of up to 40\% load growth over the next decade, and
considerable uncertainty about whether and when it will materialize,
many regions of the United States may enter a period of severe supply
tightness. Building new generation at the required pace could run into
supply-chain limits and rising incremental costs that push wholesale
energy and capacity prices higher, costs ultimately passed on to
households and existing businesses. Our results add an economic reason
for the supply response to lag even where physical constraints do not
bind: an investor who expands capacity erodes the scarcity rents that
motivated the investment. In the regime ahead, rising wholesale
prices, not the dilution of fixed costs, will be the stronger factor
in what existing customers pay, and capturing this requires a
structural model of price formation because that regime lies outside
the historical sample on which the retrospective studies rest.

Several extensions would sharpen these conclusions: endogenous
data-center demand, in which developers choose when to connect, how
much capacity to request, and whether to sign long-term contracts; a
distinction between project initiation and completion to represent
construction pipelines, cancellation risk, and uncertain
time-to-build; a retail-rate layer that weighs the dilution effect
directly against the wholesale effect analyzed here.
 \appendix
\section{Formulas \& Figures for Section \ref{sec:no_new}}
\label{app:fixed_supply}

In Section \ref{sec:no_new}, when $X$ grows linearly, the
market-clearing price is now determined by
\[
  I_tF_1(P_t)+X_tF_2(P_t)=S_0.
\]
As long as both groups remain active, we have for $ P_t<A_1$:
\begin{equation}
  \label{eq:Pt_formula}
  P_t = A_2+\frac{R_t}{2} -\frac{1}{2} \sqrt{R_t^2 +4(A_2-A_1)R_t +4(A_2-P_0)^2\frac{S_0}{X_t}}, \quad R_t \coloneqq \frac{(A_2-P_0)^2}{A_1-P_0}\frac{I_t}{X_t}.
\end{equation}

Define \(\tau\) to be the time at which traditional consumers reach
their effective choke price, i.e. \(P_\tau=A_1\). Since \(A_1<A_2\),
we have \(F_2(A_1)>0\). Moreover, \(X_t=X_0+c_Xt\) grows without
bound, while supply remains fixed at \(S_0\). Hence \(X_tF_2(A_1)\)
eventually reaches \(S_0\), so \(\tau<\infty\), unless group 1 is
already priced out at \(t=0\). At \(t=\tau\), price-responsive
traditional demand is zero:
\[
  D_1(I_\tau,A_1) = I_\tau F_1(A_1) = 0.
\]
The market-clearing condition therefore becomes $S_0=X_\tau F_2(A_1)$,
which, using the definition of \(F_2\), gives that the dropout time,
when traditional demand is gone, is given by
\begin{equation}
  \label{eq:kdef}
  \tau = \frac{1}{c_X} \left(kS_0 -X_0 \right)^+, \quad k\coloneqq \left(\frac{A_2-P_0}{A_2-A_1} \right)^2.
\end{equation}
Then $\tau>0$ as long as initial reference data center demand
$X_0<kS_0$; otherwise group 1 is already priced out at the initial
time and one sets \(\tau=0\).

For \(t\geq\tau\), group 1 is no longer active, and market clearing is
determined entirely by price-responsive data-center demand:
$S_0=X_tF_2(P_t)$. It follows that
\[
  P_t = A_2-(A_2-P_0) \sqrt{\frac{S_0}{X_0+c_Xt}}, \qquad t \geq \tau,
\]
and so $ \lim_{t\to\infty}P_t=A_2$. Thus the price crosses the
traditional-consumer choke price $A_1$ in finite time, but approaches
the hyperscaler choke price $A_2$ only asymptotically. In this sense,
sufficiently rapid growth in reference data-center demand can price
the more elastic traditional demand out of the market when supply does
not expand. The price nevertheless remains below the hyperscaler
effective choke price \(A_2\) at every finite time.

Figure~\ref{fig:hyperscaler_price_perturbation} illustrates this
predicted price growth over time using parameter values given in
Sections~\ref{sec:market_clearing} and \ref{sec:deterministic_growth},
and Figure~\ref{fig:fixed_supply_demand_decomposition} offers insight
into how each group's demand evolves under the predicted. The
reference data-center demand shifts the price path upward, but it does
not cross \(A_1=\$70/\mathrm{MWh}\) until after approximately
\(\tau=22.3\) years, after which traditional demand is negligible and
the market is supported by price-responsive data-center demand. The
price remains below the hyperscaler effective choke price \(A_2\) at
every finite time.
\begin{figure}[H]
  \centering
  \includegraphics[width=0.7\textwidth]
  {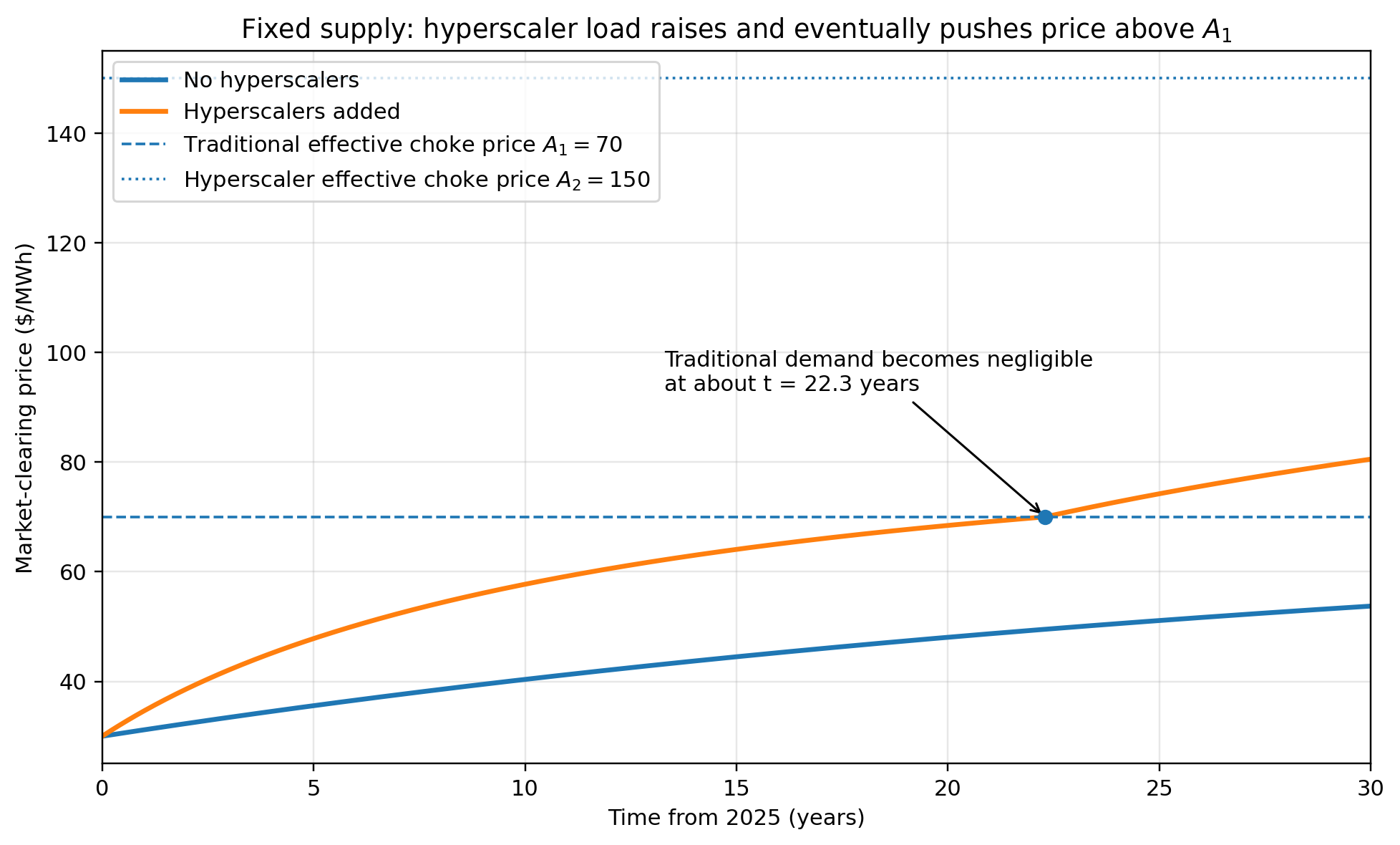}
  \caption{Effect of growth in reference data-center demand on the market clearing price
  under fixed supply.}
  \label{fig:hyperscaler_price_perturbation}
\end{figure}
\begin{figure}[H]
  \centering
  \begin{subfigure}{0.5\textwidth}
    \centering
    \includegraphics[width=\linewidth]{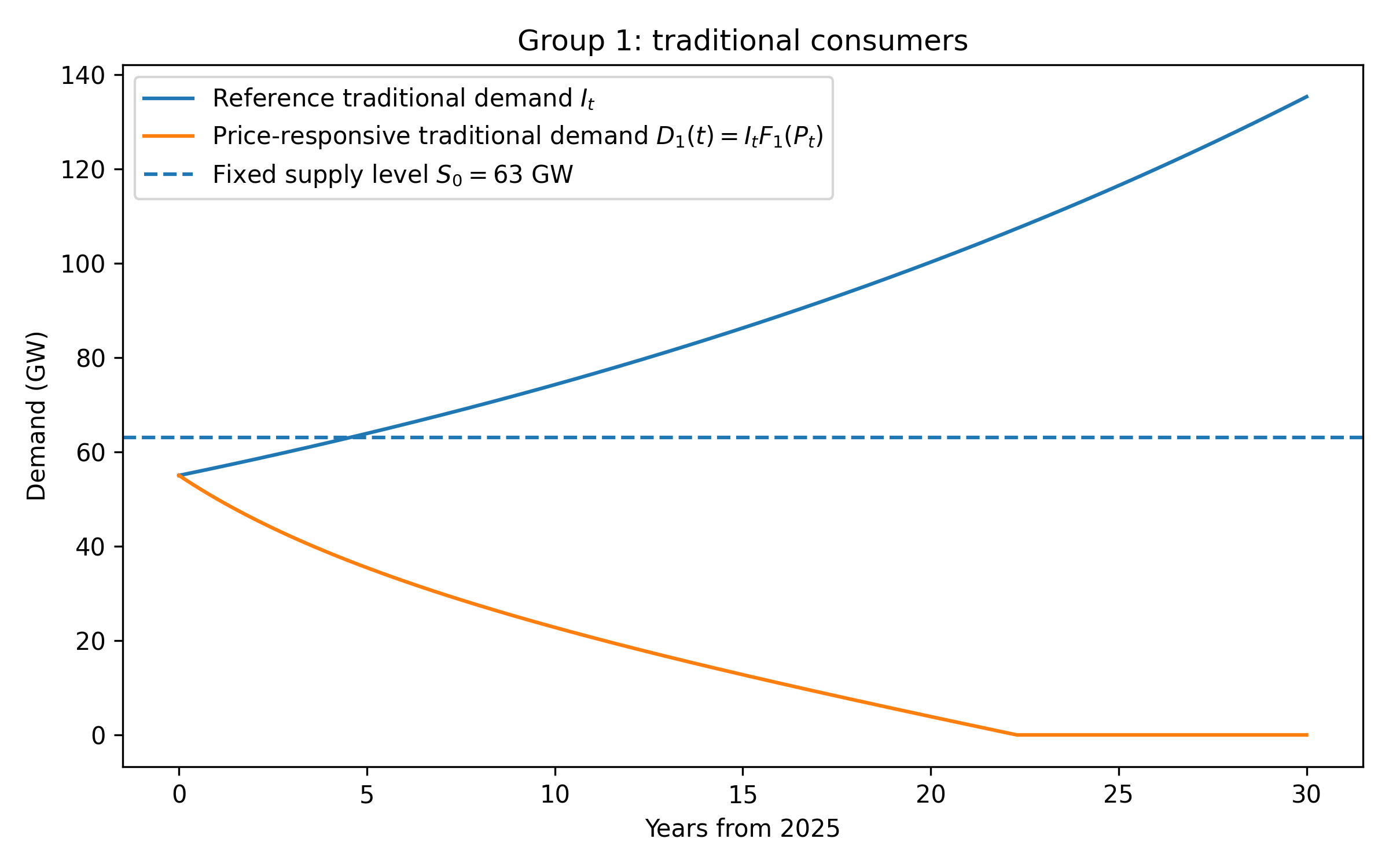}
    \caption{Traditional consumers}
    \label{fig:fixed_supply_traditional_demand}
  \end{subfigure}\hfill
  \begin{subfigure}{0.5\textwidth}
    \centering
    \includegraphics[width=\linewidth]{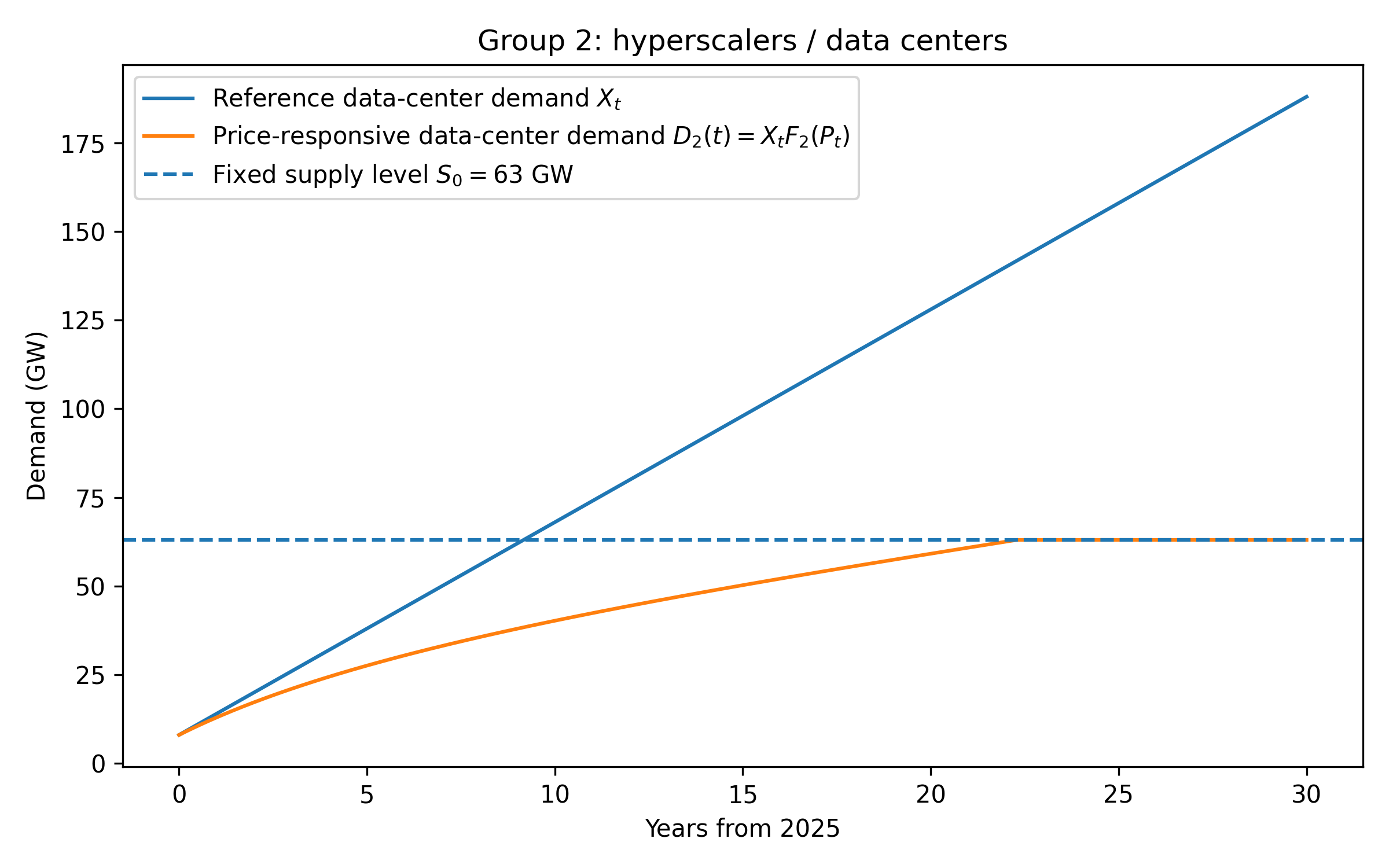}
    \caption{Data centers / hyperscalers}
    \label{fig:fixed_supply_datacenter_demand}
  \end{subfigure}
  \caption{Reference and price-responsive demands under fixed supply.
    Panel~(\subref{fig:fixed_supply_traditional_demand}) shows
    traditional consumers: reference traditional demand \(I_t\) rises
    over time, while price-responsive traditional demand
    \(D_1(t)=I_tF_1(P_t)\) declines as the market-clearing price
    rises. Panel~(\subref{fig:fixed_supply_datacenter_demand}) shows
    data centers: reference data-center demand \(X_t\) rises over
    time, while price-responsive data-center demand
    \(D_2(t)=X_tF_2(P_t)\) increases and eventually absorbs the full
    fixed supply \(S_0\).}
  \label{fig:fixed_supply_demand_decomposition}
\end{figure}

Finally, when supply grows linearly
(Section~\ref{sec:deterministic_supply_growth}), we can re-compute
$\tau$. As long as both groups remain active, the market-clearing
price is obtained from the expression in \eqref{eq:Pt_formula} by
replacing \(S_0\) with \(S_t\). The dropout time of group 1 is again
defined by \(P_\tau=A_1\). Hence, provided \(c_X>kc_S\), we have
$\tau = \frac{kS_0-X_0}{c_X-kc_S}$. This reduces to the fixed-supply
expression \eqref{eq:kdef} when \(c_S=0\). If $c_S\geq \frac{c_X}{k}$,
then supply grows sufficiently rapidly that group 1 never reaches its
effective choke price, and \(\tau=\infty\). With \(c_X=6\)~GW/year as
in Section \ref{sec:deterministic_growth}, the threshold is
$\frac{c_X}{k}\approx 2.67\ \text{GW/year}$. Thus all of the positive
supply-growth rates considered in
Figure~\ref{fig:deterministic_supply_growth} prevent traditional
demand from reaching zero.
 \section{Technical Foundations for the Controlled-intensity Model}
\label{app:hjb}

This appendix gives the dynamic-programming calculation for the
controlled-intensity model used in
Section~\ref{sec:control_intensity}. The formulation follows the
point-process control framework in \cite[Ch. VII, \S
2]{bremaud1981point}, \cite[Chapter 21]{cohen2015stochastic}, and
\cite{hernandezhernandez2019martingaleapproachcontrolgeneral},
specialized to the state variables in the paper: available supply
\(S\), reference data-center demand \(X\), and deterministic reference
traditional demand \(I_t=I_0e^{\gamma t}\).

\subsection{Controlled-intensity Setup}
\label{sec:controlled-intensity_setup}

\paragraph{Stochastic basis.}
Fix a probability space \((\Omega,\mathcal{F},\mathbb{P})\) carrying a
Poisson random measure \(\mathcal{N}(dz,dt)\) on
\(\mathbb{R}_+\times\mathbb{R}_+\) with compensator \(dz\otimes dt\)
and an independent homogeneous Poisson process \(N^\mu\) with
intensity \(\mu>0\). The symbol \(\mu\) is reserved for this exogenous
data-center arrival intensity. Define the compensated random measure
by
\[
  \widetilde{\mathcal N}(dz,dt) = \mathcal N(dz,dt)-dz\,dt.
\]
The filtration \(\mathbb{F}=(\mathcal{F}_t)_{t\geq0}\) is generated by
\(\mathcal{N}\), \(N^\mu\), and the initial conditions, augmented to
satisfy the usual conditions. For a nonnegative predictable intensity
\(\lambda_t\), the controlled supply-arrival process is constructed by
thinning:
\[
  dN_t^\lambda = \int_0^\infty {\bf 1}_{[0,\lambda_t]}(z)\mathcal N(dz,dt) = \lambda_t\,dt + \int_0^\infty {\bf 1}_{[0,\lambda_t]}(z)\widetilde{\mathcal N}(dz,dt).
\]
All expectations below are taken under this fixed probability measure
\(\mathbb{P}\).

\paragraph{State dynamics.}
For the single-technology model, $ dS_t = \delta\,dN_t^\lambda$ and
$dX_t = \kappa\,dN_t^\mu$. The deterministic traditional demand path
is fixed as \(I_t=I_0e^{\gamma t}\). For notation convenience, we wrap
the stochastic state processes into vector
\(Y_t\coloneqq (S_t, X_t)\). The stochastic state has initial
condition \(Y_0 = y_0 = (s_0, x_0)\). Feedback controls use the
predictable left-limit state,
\[
  \lambda_t = \lambda(t, Y_{t-}).
\]
The market-clearing price is written as \(P(t,y)\), with the
deterministic dependence on \(I_t\) absorbed into the explicit time
argument.

\begin{definition}[Admissible controls]
  Let \(p\geq1\) and let \(L\subseteq\mathbb{R}_{\geq 0}\) be the
  action set. For \((t,y)\in[0,T]\times\mathbb{R}_+^2\), the
  admissible class \(\Lambda_p(t,y)\) consists of predictable
  nonnegative controls taking values in \(L\), satisfying
  \[
    \mathbb{E}_{t,y}\left[\int_{t}^T |\lambda_u|^p\,du\right]<\infty,
  \]
  and, when restricted to Markov feedback controls, admitting the form
  \(\lambda_t=\lambda(t,Y_{t-})\) for a measurable function
  \(\lambda: [0, T]\times \mathbb{R}_2^+\to L\). We assume admissible
  controls are stable under concatenation at stopping times.
\end{definition}

The running payoff is producer revenue net of investment cost:
\[
  f(t, y, \lambda) = s P(t, y) - C(\lambda), \qquad
  f:[0,T]\times\mathbb{R}_+^2\times L \to \mathbb{R},
\]
where \(y = (s, x)\). This payoff may be negative because investment
costs can exceed contemporaneous revenue. Using constant discount rate
\(r\), the expected remaining payoff starting at \((t, y)\) under
\(\lambda\) is
\[
  J(t, y;\lambda) = \mathbb{E}_{t, y}\left[\int_{t}^T e^{-r(u-t)}
    f(u,Y_u,\lambda_u)\,du \right],
\]
and the value function is
\[
  v(t,y) = \sup_{\lambda\in\Lambda_p(t, y)} J(t ,y;\lambda).
\]
We restrict the admissible class further, if necessary, so that the
controlled state process has the finite moments required for the
dynamic-programming and verification arguments below.

\subsection{Dynamic Programming Principle}
\label{sec:dpp}

\begin{lemma}[DPP]
\label{lem:dpp}
  Assume \(V\) is continuous.  For every stopping time \(\theta\) with values in \([t,T]\),
  \[
    v(t, y)
    =
    \sup_{\lambda\in\Lambda_p(t, y)}
    \mathbb{E}_{t, y}\left[
    \int_{t}^{\theta}
    e^{-r(u-t)}f(u,Y_u,\lambda_u)\,du
    +
    e^{-r(\theta-t)}V(\theta,Y_\theta)
    \right].
  \]
\end{lemma}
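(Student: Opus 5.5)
We prove the two inequalities separately, reading \(V\) as the value function \(v\) (assumed continuous), and write \(W(t,y)\) for the right-hand side.

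\emph{Upper bound \(v\le W\).} Fix \(\lambda\in\Lambda_p(t,y)\). By the tower property and the Markov structure of the thinned point-process construction, the cost-to-go after \(\theta\) depends on the past only through \((\theta,Y_\theta)\):
\[
  J(t,y;\lambda)=\mathbb{E}_{t,y}\Bigl[\int_t^\theta e^{-r(u-t)}f(u,Y_u,\lambda_u)\,du + e^{-r(\theta-t)}J\bigl(\theta,Y_\theta;\lambda^{\theta}\bigr)\Bigr],
\]
where \(\lambda^\theta\) is the control restricted to \([\theta,T]\) and conditioned on \(\mathcal F_\theta\). On \([\theta,T]\) the driving objects are again a Poisson random measure with compensator \(dz\,dt\) and an independent Poisson process, by the strong Markov property of Poisson random measures at stopping times. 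Hence, for \(\mathbb P\)-a.e. \(\omega\), the restricted control is admissible from \((\theta,Y_\theta)\), and
\[
  J\bigl(\theta,Y_\theta;\lambda^\theta\bigr)\le v(\theta,Y_\theta).
\]
Taking the supremum over \(\lambda\) gives \(v\le W\). Integrability of the running term follows from the \(L^p\) bound on \(\lambda\), the growth of \(C\), and the moment restriction on the state. Note that \(sP\le s\min(A_1,A_2)\), and \(S\) has finite moments because \(\mathbb E[N^\lambda_T]=\mathbb E\int\lambda\,du\).

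\emph{Lower bound \(v\ge W\).} Fix \(\varepsilon>0\) and \(\lambda\in\Lambda_p(t,y)\). We build an \(\varepsilon\)-optimal continuation by a countable-partition argument.
\begin{enumerate}
  \item \emph{Local near-optimal controls.} The reachable state space is discrete in \(y\): \(S\in s+\delta\mathbb N\) and \(X\in x+\kappa\mathbb N\). So only time needs partitioning. Split \([t,T]\) into small intervals \(B_i\). For each lattice point \(z_k\) and each \(i\), pick \(\lambda^{i,k}\) that is \(\varepsilon\)-optimal at \((t_i,z_k)\).
  \item \emph{Uniform control on each cell.} Use continuity of \(v\), together with continuity of \(J(\cdot,z;\lambda^{i,k})\) in the starting time, to get \(J(s',z_k;\lambda^{i,k})\ge v(s',z_k)-3\varepsilon\) for all \(s'\in B_i\).
  \item \emph{Time-shift.} The second continuity in Step 2 is the delicate point. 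A control started at \(t_i\) must be time-shifted to start at \(s'\). The payoff changes by a term controlled by the modulus of continuity of \(P(\cdot,y)\) in \(t\), which is continuous through \(I_t\) by the implicit function theorem in Remark~\ref{rem:unique_price_condition}, plus a tail of the cost integral.
  \item \emph{Concatenation.} Define the control equal to \(\lambda\) on \([t,\theta)\), and to the shifted \(\lambda^{i,k}\) on \(\{\theta\in B_i,\ Y_\theta=z_k\}\). This is admissible by the assumed stability under concatenation at stopping times; its \(L^p\) bound needs \(\mathbb E\int|\lambda^{i,k}|^p\) bounded uniformly, which we may arrange by truncating.
  \item \emph{Conclusion.} By the tower identity above, its payoff is at least the expectation inside \(W\) minus \(3\varepsilon\). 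Since \(\varepsilon\) and \(\lambda\) are arbitrary, \(v\ge W\).
\end{enumerate}

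\emph{Main obstacle.} The main obstacle is the measurable-selection and continuity step: showing that the pasted control is predictable and admissible, and that the \(\varepsilon\)-optimality is uniform in time. If time-continuity of \(J\) for fixed controls proves awkward, a fallback is to first prove the DPP for deterministic times on a dyadic grid, where only countably many lattice states occur. We would then pass to general stopping times by approximating \(\theta\) from above by dyadic \(\theta_n\downarrow\theta\). The passage to the limit uses right-continuity of paths, continuity of \(v\), and dominated convergence.
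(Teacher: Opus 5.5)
Your proof is the paper's argument. For one inequality you condition on $\mathcal F_\theta$ and bound the continuation payoff by $v(\theta,Y_\theta)$; for the other you paste an $\varepsilon$-optimal continuation at $\theta$ using stability under concatenation and the Markov property, and your countable-partition (or dyadic-approximation) construction makes explicit the measurable choice of that continuation, which the paper takes for granted.

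Two small repairs are needed. First, in Steps 1--2 the time cells must depend on $z_k$ and be chosen after the controls. For each $z_k$, cover $[t,T]$ by intervals on which the chosen control stays $3\varepsilon$-optimal, extract a finite subcover, and disjointify; your dyadic fallback avoids this issue altogether. Second, the revenue bound should be $sP\le s\max(A_1,A_2)$, not $\min$, because the clearing price exceeds $A_1$ once traditional demand is priced out.
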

\begin{proof}
  Fix an admissible control on \([t,T]\) and decompose its payoff at
  \(\theta\). Conditional on \(\mathcal{F}_\theta\), the continuation
  value from \((\theta,Y_\theta)\) is bounded above by
  \(v(\theta,Y_\theta)\), which gives one inequality after taking the
  supremum over admissible controls.

  For the reverse inequality, choose any admissible control up to
  \(\theta\) and concatenate it with an \(\varepsilon\)-optimal
  admissible continuation control for the post-\(\theta\) state
  \((\theta,Y_\theta)\). Stability of \(\Lambda_p\) under concatenation
  and the Markov property give an admissible control on \([t,T]\) whose
  payoff is within \(\varepsilon\) of the displayed right-hand side.
  Sending \(\varepsilon\downarrow0\) proves the result.
\end{proof}

\subsection{Martingale Principle}
\label{sec:martingale_principle}

\begin{lemma}[Martingale Principle]
  \label{lem:martingale_principle}
  For an admissible control \(\lambda\), define the Bellman process
  \[
    M_t^\lambda = \int_{0}^t e^{-ru} f(u,Y_u,\lambda_u)\,du +
    e^{-rt}V(t,Y_t).
  \]
  For \(0\leq r\leq\theta\leq T\),
  \[
    M_r^\lambda \geq
    \mathbb{E}\left[M_\theta^\lambda\mid\mathcal{F}_r\right],
  \]
  so \(M^\lambda\) is a supermartingale. Under an optimal control it
  is a martingale.
\end{lemma}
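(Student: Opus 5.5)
The plan is to derive the supermartingale inequality directly from the dynamic programming principle, Lemma~\ref{lem:dpp}, applied at the random time \(r\) with the state \(Y_r\). Here \(V\) is read as the value function \(v\), and the lower time index \(r\) is distinct from the discount rate. The time-\(r\) part of \(M\) is \(\mathcal{F}_r\)-measurable, so I first write
\[
  \mathbb{E}\left[M^\lambda_\theta\mid\mathcal{F}_r\right]
  = \int_0^r e^{-ru}f(u,Y_u,\lambda_u)\,du
  + e^{-rr}\,\mathbb{E}\left[\int_r^\theta e^{-r(u-r)}f(u,Y_u,\lambda_u)\,du + e^{-r(\theta-r)}v(\theta,Y_\theta)\,\Big|\,\mathcal{F}_r\right].
\]
Comparing with \(M^\lambda_r\), it suffices to show that the conditional expectation in brackets is at most \(v(r,Y_r)\) almost surely.

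To prove this, I use the strong Markov property of the controlled jump process. This holds because \(\mathcal N\) and \(N^\mu\) have independent increments, and the thinning construction makes the post-\(r\) dynamics depend only on \(Y_r\) and on the restriction of \(\lambda\) to \([r,T]\). Conditionally on \(\mathcal{F}_r\), that restriction is an admissible control in \(\Lambda_p(r,Y_r)\): it stays predictable and its \(p\)-th moment integrability is inherited from \(\lambda\). The bracket is therefore the payoff functional for the problem started at \((r,Y_r)\), evaluated at this particular control with the truncation at \(\theta\). The DPP identity of Lemma~\ref{lem:dpp}, taken at \((r,Y_r)\) and stopping time \(\theta\), says that the supremum of exactly this expression over admissible controls equals \(v(r,Y_r)\). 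Hence the bracket is at most \(v(r,Y_r)\), which gives \(M_r^\lambda\geq\mathbb{E}[M_\theta^\lambda\mid\mathcal{F}_r]\).

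Integrability of \(M^\lambda\) is needed for the supermartingale property. It follows because \(sP\) is controlled by the finite moments of \(S\) assumed in the admissible class: \(P\leq\min\{A_1,A_2\}\), and \(S_u\leq s+\delta N^\lambda_T\). The cost term is integrable by the \(L^p\) condition together with the growth of \(C\), and \(0\leq v\) is bounded by the revenue bound.

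For the martingale claim, take an optimal control \(\lambda^*\), for instance the feedback \(\lambda^*(t,Y_{t-})\). The same decomposition applies, and optimality at time \(0\) forces equality in the conditional DPP inequality almost surely. Indeed, \(\mathbb{E}[M^{\lambda^*}_T]=v(0,y_0)=M^{\lambda^*}_0\), so a supermartingale with constant expectation is a martingale.

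I expect the main obstacle to be the measurable-selection and conditioning step. This means justifying that the conditional payoff given \(\mathcal{F}_r\) equals \(J(r,Y_r;\lambda|_{[r,T]})\) in the regular-conditional-probability sense, and that the shifted control is admissible pathwise. I would handle it by working with the canonical Poisson-measure space and invoking the strong Markov property of the thinned point process as in \cite[Ch.~VII]{bremaud1981point}, taking the concatenation stability in the admissibility definition as given.
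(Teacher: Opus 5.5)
Your proposal follows the paper's proof: apply the DPP at time $r$ from the state $Y_r$, bound the conditional continuation payoff of the fixed control on $[r,\theta]$ (plus the discounted value at $\theta$) by $v(r,Y_r)$, then discount and add the payoff accrued on $[0,r]$. Your conditioning/admissibility discussion and the constant-expectation argument for the martingale case make explicit what the paper asserts in one line. One small slip in the integrability aside: the clearing price can exceed $A_1$ once traditional demand is priced out, so the valid bound is $P<\max\{A_1,A_2\}=A_2$; and integrability of the cost $C(\lambda)=\lambda^\beta/\beta+\rho\lambda$ needs $p\geq\beta$, not merely $p\geq1$.
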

\begin{proof}
  Apply the DPP at time \(r\) with stopping time \(\theta\). For the
  fixed continuation control \(\lambda\), the value \(V(r,Y_r)\) is at
  least the conditional expected payoff earned from \(r\) to
  \(\theta\) plus the discounted continuation value at \(\theta\).
  Multiplying by the discount factor from \(0\) to \(r\) and adding
  the payoff accumulated on \([0,r]\) gives the displayed
  supermartingale inequality. If \(\lambda\) is optimal, the DPP is
  attained along \(\lambda\), so the inequality is an equality.
\end{proof}

\subsection{HJB equation}
\label{sec:hjb_equation}

To derive the HJB equation from the DPP, assume
\(V\in C^{1, 1}([0, T], \mathbb{R}_+^2)\) and \(f\) is continuous in
\((t, y)\) for each fixed \(\lambda\in L\). For a test function
\(\phi(t,y)\), define the forward differences
\[
  \Delta_s\phi(t, y) = \phi(t,s+\delta,x)-\phi(t,s,x), \qquad
  \Delta_x\phi(t, y) = \phi(t,s,x+\kappa)-\phi(t,s,x)
\]
with \(y = (s, x)\). The controlled state generator acts only on the
state variables:
\[
  \mathcal A^\lambda\phi(t, y) = \lambda\Delta_s\phi(t, y) +
  \mu\Delta_x\phi(t, y).
\]
Note that the time derivative is not part of \(\mathcal A^\lambda\).

Applying the dynamic programming principle over a short interval
\([t,t+h]\), assuming smoothness of \(v\), gives
\[
  0= \sup_{\lambda\in L} \left\{sP(t,s,x)-C(\lambda) + \partial_t
    v(t,s,x) + \mathcal A^\lambda v(t,s,x) - rv(t,s,x) \right\}.
\]
Equivalently,
\[
  \partial_t v(t,s,x) + \sup_{\lambda\in L} \left\{\lambda\Delta_s
    v(t,s,x)-C(\lambda) \right\} + \mu\Delta_x v(t,s,x) + sP(t,s,x) -
  rv(t,s,x) = 0,
\]
with terminal condition \(v(T,s,x)=0\). This is the single-technology
HJB in equation~\eqref{eq:hjb_single}.

\subsection{Verification Theorem}
\label{sec:verification}

We record the verification statement corresponding to the HJB above.

\begin{theorem}[Verification]
  \label{thm:verification}
  Let \(w\in C^{1, 1}([0,T)\times \mathbb{R}_+^2)\cap
  C([0,T]\times\mathbb{R}_+^2)\) have at most polynomial growth, with
  sufficient integrability under the admissible controls to justify
  Dynkin's formula and passage to the terminal time. Suppose
  \[
    \partial_t w(t,y) + \sup_{\lambda\in L} \left\{\mathcal A^\lambda
      w(t,y)-C(\lambda) \right\} + sP(t,y) - rw(t,y) \leq 0
  \]
  on \([0,T)\times\mathbb{R}_+^2\), and \(w(T,y)\geq0\). Then
  \[
    w(t,y)\geq v(t,y).
  \]
  If, in addition, \(w(T,y)=0\) and there exists a measurable selector
  \(\hat\lambda(t,y)\in\mathbb{R}_+\) attaining the supremum such that
  \[
    \partial_t w(t,y) + \mathcal A^{\hat\lambda}w(t,y) -
    C(\hat\lambda(t,y)) + sP(t,y) - rw(t,y) = 0,
  \]
  and the feedback control \(\hat\lambda_u=\hat\lambda(u,Y_{u-})\) is
  admissible, then \(w=V\) and \(\hat\lambda\) is optimal.
\end{theorem}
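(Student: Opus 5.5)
The plan is the standard Dynkin-formula argument for controlled point processes, applied to the discounted process \(e^{-r(u-t)}w(u,Y_u)\). Fix \((t,y)\in[0,T)\times\mathbb{R}_+^2\) and an admissible \(\lambda\in\Lambda_p(t,y)\). Since \(Y=(S,X)\) is piecewise constant, with jumps of size \(\delta\) in \(S\) driven by the thinned measure \(\mathbf 1_{[0,\lambda_u]}(z)\mathcal N(dz,du)\) and jumps of size \(\kappa\) in \(X\) driven by \(N^\mu\), the change-of-variables formula gives exactly, for \(t\le \theta< T\),
\[
  e^{-r(\theta-t)}w(\theta,Y_\theta) = w(t,y) + \int_t^\theta e^{-r(u-t)}\bigl(\partial_t w - rw + \mathcal A^{\lambda_u}w\bigr)(u,Y_{u-})\,du + \mathcal M_\theta .
\]
Here \(\mathcal M\) collects the stochastic integrals of \(e^{-r(u-t)}\Delta_s w(u,Y_{u-})\) against the compensated thinned measure and of \(e^{-r(u-t)}\Delta_x w(u,Y_{u-})\) against \(dN^\mu_u-\mu\,du\). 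Only the time regularity of \(w\) is used, via \(\partial_t w\) between jumps. This is why \(C^{1,1}\) regularity on \([0,T)\) suffices.

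Next I show that \(\mathcal M\) is a true martingale, or at least has zero expectation after localization. I take stopping times \(\tau_n\uparrow T\), for instance the first time \(S\) or \(X\) leaves \([0,n]\) capped at \(T-1/n\). On \([t,\tau_n]\) the integrands are bounded, and \(\mathbb{E}\int \lambda_u\,du<\infty\) holds by admissibility, using \(p\ge1\) together with \(T<\infty\). The stopped compensated integrals therefore have zero mean. Taking expectations and using the HJB inequality in the form \(\partial_t w - rw + \mathcal A^{\lambda}w \le -(sP - C(\lambda))\) for every \(\lambda\in L\), I obtain
\[
  w(t,y) \ge \mathbb{E}_{t,y}\Bigl[\int_t^{\tau_n} e^{-r(u-t)} f(u,Y_u,\lambda_u)\,du + e^{-r(\tau_n-t)}w(\tau_n,Y_{\tau_n})\Bigr].
\]
Here \(Y_{u-}=Y_u\) for almost every \(u\).

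The main obstacle is letting \(n\to\infty\). The plan is as follows.
\begin{enumerate}
  \item The running revenue \(sP\) is nonnegative, and by Remark~\ref{rem:unique_price_condition} the price satisfies \(P\le \min(A_1,A_2)\). Hence \(sP\) has linear growth.
  \item \(C(\lambda)\ge0\) is integrable under the admissibility moment condition. I will impose this as part of the ``sufficient integrability'' hypothesis; for the power cost it amounts to \(p\ge\beta\).
  \item \(w\) has polynomial growth. \(S_T\) is dominated by \(s+\delta N^\lambda_T\), whose moments are controlled by the moments of \(\int\lambda\), and \(X_T\) has all moments.
\end{enumerate}
This gives uniform integrability of \(w(\tau_n,Y_{\tau_n})\), and dominated and monotone convergence apply. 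Continuity of \(w\) up to \(T\) and quasi-left-continuity of \(Y\) then give \(w(\tau_n,Y_{\tau_n})\to w(T,Y_T)\) almost surely. Since \(w(T,\cdot)\ge0\), I get \(w(t,y)\ge J(t,y;\lambda)\). Taking the supremum over \(\lambda\) yields \(w\ge v\).

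For the second part, I repeat the computation with the feedback control \(\hat\lambda_u=\hat\lambda(u,Y_{u-})\), which is admissible by assumption. Now the HJB holds with equality along the path, so every inequality above becomes an equality. With \(w(T,\cdot)=0\) the limit gives \(w(t,y)=J(t,y;\hat\lambda)\le v(t,y)\). Combined with the first part this gives \(w=v\) (the statement's \(V\)), and \(\hat\lambda\) attains the supremum, so it is optimal.

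One remaining detail is that the controlled state dynamics must be well-posed under the feedback \(\hat\lambda\). I will argue this by constructing the path inductively between successive jumps, since the thinning construction is explicit. The argument uses no DPP; it relies only on Dynkin's formula and the HJB inequality.
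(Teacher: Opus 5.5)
Your proposal is correct and follows the same route as the paper's proof. Both apply Dynkin's formula to the discounted process \(e^{-r(u-t)}w(u,Y_u)\) with localization, use the supersolution inequality for every \(\lambda\in L\), pass to \(T\) using \(w(T,\cdot)\ge 0\), and then get equality along \(\hat\lambda\); you additionally spell out the localization, martingale and terminal-limit steps that the paper leaves implicit. One small slip: Remark~\ref{rem:unique_price_condition} does not give \(P\le\min(A_1,A_2)\), and the appendix shows prices exceeding \(A_1\) once traditional demand drops out; however, total demand vanishes above both choke prices, so \(0\le sP\le \max(A_1,A_2)\,s\) and your linear-growth bound on revenue still holds.
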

\begin{proof}
  Fix \((t,y)\) and an admissible control \(\lambda\). Let
  \(Y_u = (S_u,X_u)\) denote the corresponding state process on \([t,T]\).
  Localizing if necessary and applying Dynkin's formula to the
  discounted process gives
  \begin{align*}
    \mathbb E\!\left[e^{-r(\tau-t)}w(\tau,Y_\tau)\right]
    &= w(t,y) \\
    &\quad+ \mathbb E\!\left[\int_t^\tau e^{-r(u-t)} \left(\partial_t w(u,Y_u) + \mathcal A^{\lambda_u}w(u,Y_u) - rw(u,Y_u) \right)\,du \right].
  \end{align*}
  Since the supersolution inequality implies
  \[
    \partial_t w + \mathcal A^\lambda w - rw + sP - C(\lambda) \leq 0
  \]
  for every admissible \(\lambda\), we obtain
  \[
    w(t,y) \geq \mathbb E\!\left[\int_t^\tau e^{-r(u-t)}
      \bigl(S_uP(u,Y_u)-C(\lambda_u)\bigr)\,du +
      e^{-r(\tau-t)}w(\tau,Y_\tau) \right].
  \]
  Letting \(\tau\uparrow T\) and using \(w(T,\cdot)\geq0\) gives
  \(w(t,y)\geq J(t,y;\lambda)\). Taking the supremum over \(\lambda\)
  yields \(w\geq v\).

  If \(\hat\lambda\) attains the supremum and the equality condition
  holds, the preceding inequalities become equalities under
  \(\hat\lambda\), with terminal value zero. Hence
  \(w(t,y)=J(t,y;\hat\lambda)\leq v(t,y)\). Together with \(w\geq v\),
  this proves \(w=v\) and optimality of \(\hat\lambda\).
\end{proof}

\subsection{Multi-technology Extension}
\label{sec:multi-tech_extension}

Analogously to the single-technology setting, we define a probability
space with now \(d\) independent Poisson random measures
\(\mathcal{N}^j(dz, dt)\) \((j\in \{1, \ldots, d\})\) on
\(\mathbb{R}_{+}\times \mathbb{R}_+\) and apply thinning to construct
\(d\) controlled-intensity point processes driving available supply of
each technology. The aggregate available supply \(S_t\), which we
define by superposition, and reference data-center demand \(X_t\)
still constitute the state. For the higher-dimensional case, now
\(L\subseteq \mathbb{R}_{\geq 0}^d\),
\(C\colon \mathbb{R}_{\geq 0}^d\to \mathbb{R}_{\geq 0}^d\), and
\[
f(t, y, \lambda) = s P(t, y) - \sum_{j=1}^d C_j(\lambda_j), \qquad f\colon [0, T]\times \mathbb{R}_+^2\times \mathbb{R}_+^d\to \mathbb{R},
\]
the same formulas for the expected remaining payoff and value function
apply here. The DPP and martingale principle continue to holdt by
straightforward extensions of the arguments presented in the
single-technology case.

For \(d\) supply technologies with controlled intensities
\(\boldsymbol\lambda=(\lambda_1,\ldots,\lambda_d)\) and jump sizes
\(\delta_1,\ldots,\delta_d\), define
\(\Delta_{s_j}\phi(t,y) = \phi(t,s+\delta_j,x)-\phi(t,s,x)\). The
state generator is then
\[
  \mathcal A^{\boldsymbol\lambda}\phi(t,y) = \sum_{j=1}^d
  \lambda_j\Delta_{s_j}\phi(t,y) + \mu\Delta_x\phi(t,y).
\]
The leads the multi-technology HJB presented in
Section~\ref{sec:multiple_technologies},
equation~\eqref{eq:multi_control_hjb}. The verification theorem is a
natural multivariate extension of Theorem~\ref{thm:verification}, so
we do not repeat it here.
 \section{Numerical Implementation Details}
\label{app:numerics}

We first describe the numerical solver for \eqref{eq:hjb_single}.
Provided positive integers \(N_s\) and \(N_x\), the algorithm operates
on a discrete state space
\[
  \{s_{\text{min}} + \delta i : i\in \{0, 1, \ldots, N_s-1\}\}\times
  \{x_{\text{min}} + \kappa j : j\in \{0, 1, \ldots, N_x-1\}\}.
\]
It will be convenient to think of this grid as a (tensor) product of
vectors, i.e.
\[
  \mathbf{s} = \begin{bmatrix} s_{\min} & \cdots & s_{\min} + \delta
    (N_s - 1) \end{bmatrix}^{\top} \quad\text{and}\quad \mathbf{x}
  = \begin{bmatrix} x_{\min} & \cdots & x_{\min} + \kappa (N_x -
    1) \end{bmatrix}^{\top}.
\]
We then discretize the first-order system of ODEs that corresponds to
the discretized version of \eqref{eq:hjb_single}, namely
\begin{equation}\label{eq:discretized_hjb_single_app}
  \frac{dv_{i,j}}{dt} + \lambda_{i,j}^{*}(v_{i+1,j} - v_{i,j}) + \mu(v_{i,j+1} - v_{i,j}) + s_i P(t, s_i, x_j) - C(\lambda_{i,j}^{*}) - rv_{i,j} = 0.
\end{equation}
where \(i\) and \(j\) run through \(\{0, \ldots, N_s-1\}\) and
\(\{0, \ldots, N_x-1\}\) respectively. We implement a semi-implicit
Euler method to solve \eqref{eq:discretized_hjb_single_app}. At each
iteration \(n\), we first compute \(\lambda^{*, (n)}_{i,j}\) by
\eqref{eq:optimal_supply_intensity} with
\(v^{(n)}_{i+1,j} - v^{(n)}_{i,j}\) replacing \(\Delta_s v\). With
\(\lambda^{*, (n)}_{i,j}\) fixed, after substituting a finite
difference for the time derivative, we solve
\begin{multline}
  \label{eq:euler_hjb_single_app}
  v^{(n+1)}_{i,j} - \Delta t\left[\lambda^{*,(n)}_{i,j}(v^{(n+1)}_{i+1,j}-v^{(n+1)}_{i,j}) +\mu(v^{(n+1)}_{i,j+1}-v^{(n+1)}_{i,j}) -rv^{(n+1)}_{i,j} \right]  \\
  = v^{(n)}_{i,j} + \Delta t\left[s_i P^{(n)}(s_i, x_j) - C(\lambda^{*,(n)}_{i,j}) \right]
\end{multline}
for \(V^{(n+1)}\). Because \(V^{0} = V(T, \cdot) = 0\), the solver
works backwards in time, and \(V^{(n+1)}\) is (approximately) the
value at time \(\tau_n - \Delta t\) where \(\tau_n\) is the time
corresponding to \(V^{(n)}\). Observe that this is a linear system.
Indeed, define \(N = N_s\cdot N_x\)-dimensional vectors \(V^{(n)}\),
\(S\) and \(X\) by stacking the entries of \(v^{(n)}\) row by row to
build \(V^{(n)}\) and filling each columns of \(S\) with
\(\mathbf{s}\) and each row of \(X\) with \(\mathbf{x}\). Then
\eqref{eq:euler_hjb_single_app} becomes
\[
  (I - \Delta t M^{\lambda^{*,(n)}}) V^{(n+1)} = V^{(n)} + \Delta
  t\,[S P(t, S, X) - C(\boldsymbol{\lambda}^{*,(n)})].
\]
Here \(I\) is the identity and \(M^{\lambda^{*,(n)}}\) is defined to have entries
\[
  M^{\lambda^{*,(n)}}_{q,q}=-(r+\lambda^{*,(n)}_{i,j}+\mu),\qquad
  M^{\lambda^{*,(n)}}_{q,q+1}=\mu,\qquad
  M^{\lambda^{*,(n)}}_{q,q+N_x}=\lambda^{*,(n)}_{i,j}.
\]
where \(q=iN_x+j\). The entry \(q+1\) is included only when
\(j<N_x-1\), and the entry \(q+N_x\) is included only when
\(i<N_s-1\). At the upper demand and supply boundaries we use zero
forward differences, omit the corresponding off-diagonal transition
and its matching diagonal rate, and do not allow wrap-around between
rows. Notice that \(I-\Delta t M^{\lambda^{*,(n)}}\) is sparse, so we
can solve for \(V^{(n+1)}\) efficiently.

The multi-technology solver generalizes the single-control solver to the
multi-technology setting.  It begins by similarly considering the
discrete state space
\[
  \{ s_{\min} + \eta i : i\in \{0, 1, \ldots, N_s-1\} \} \times
  \{x_{\min} + \kappa j : j\in \{0, 1, \ldots, N_x-1\}\}
\]
Here \(0 < \eta\leq \min_{1\leq \ell\leq d}\delta_\ell\) is a
coarseness parameter chosen small enough so that the \(d\) possible
jump sizes are multiples of \(\eta\). For technology \(\ell\), write
\(w_\ell=\delta_\ell/\eta\). Following the same arguments as in the
single supply case, we derive the sparse linear system
\[
  (I - \Delta t M^{\boldsymbol\lambda^{*,(n)}}) V^{(n+1)} = V^{(n)} +
  \Delta t\left[S P(t, S, X) - \sum_{\ell=1}^d
    C_\ell(\lambda_\ell^{*,(n)})\right].
\]
where \(M^{\lambda^{*, (n)}}\) now has entries
\[
  M^{\boldsymbol\lambda^{*,(n)}}_{q,q} =
  -\left(r+\mu+\sum_{\ell=1}^d\lambda_{\ell,i,j}^{*,(n)}\right),
  \qquad M^{\boldsymbol\lambda^{*,(n)}}_{q,q+1}=\mu, \qquad
  M^{\boldsymbol\lambda^{*,(n)}}_{q,q+w_\ell N_x} =
  \lambda_{\ell,i,j}^{*,(n)},
\]
for \(q=iN_x+j\). In the implementation, these entries are assembled
through precomputed sparse jump matrices, which also handle the same
indexing for interpolation on the supply grid. The interpolation is
done for purely computational purposes as it avoids potential
floating-point arthimetic errors when computing \(w_{\ell}\).
 \printbibliography

\end{document}